\tikzstyle{block} = [draw, fill=white, rectangle, 
\tikzstyle{sum} = [draw, fill=white, circle, node distance=1cm]
\tikzstyle{input} = [coordinate]
\tikzstyle{output} = [coordinate]
\tikzstyle{pinstyle} = [pin edge={to-,thin,black}]
\title{Properties of calculus in r-Complexity}
\author{Rares Folea}
\address{$^1$Department of Computer Science and Engineering, Faculty for Automatic Control and Computers, National University of Science and Technology Politehnica Bucharest, Romania \and
Doctoral School of Engineering and Applications of Lasers and Accelerators (S.D.I.A.L.A.), e-mail: {\tt rares.folea@stud.acs.upb.ro}}
\author{Emil Slusanschi}
\address{$^2$Department of Computer Science and Engineering, Faculty for Automatic Control and Computers, National University of Science and Technology Politehnica Bucharest, Romania, e-mail: {\tt emil.slusanschi@cs.pub.ro}}
\begin{document}
\pagestyle{headings}
\maketitle

%%%%%% Abstract
\begin{abstract}
{This paper presents a series of general properties of the r-Complexity calculus, a complexity measurement for assessing the performance and asymptotic behaviour of real-world algorithms. This research describes characteristics such as reflexivity, transitivity, or symmetry and discusses several conversion rules between different classes of r-Complexity, as well as establishing fundamental arithmetic principles. The work also examines the behaviour of the addition property within this system and compares its characteristics with those frequently used in the traditional Bachmann-Landau notation. Through utilizing these properties, this research seeks to promote the exploration and development of novel applications for r-Complexity, as well as accelerating the adoption rate of calculus in this refined complexity model.}
\end{abstract}

%%%%% Keywords
\begin{Keywords}
r-Complexity, computational complexity, code complexity, properties of complexity models, asymptotic analysis, algorithmic performance
\end{Keywords}

\section{Introduction}
\label{intro}

Algorithms stand at the core of computer science as they are used as primary building blocks and asymptotic notations are being widely accepted as the main method for estimating its performance, by calculating the complexity of the analysed algorithm~\cite{giumale,cormen,mogos}. When examining and comparing two or more code snippets, calculating the asymptotic complexity is important for a number of reasons: it helps predicting the scalability of the solution, serves as a comparison method between different algorithms and it can help engineers understand the inherent limitations and trade-offs of each solution. Nonetheless, it can help identify bottlenecks in the code, which can then later be optimized by developers. Algorithmic complexity is commonly expressed utilizing the Bachmann-Landau~\cite{bachmann,landau} asymptotic notations, encompassing Big-Theta, Big-O, and Big-Omega measurements.

Asymptotic code complexity is a topic of interest in various fields, including logic, coding theory, and computational systems. In general, the asymptotic complexity mostly aims to map how the algorithm's utilisation of resources (e.g. total CPU time or memory) grows, with respect to increases in the input size, to later estimate how it will perform with larger inputs. Having this metric, asymptotic notations permits the comparison of multiple algorithms' efficiency, as for example an $\Theta(n \cdot log\ n)$ algorithm is more efficient than an $\Theta(n^2)$ algorithm, for sufficiently large input sizes. Over time, many asymptotic notations~\cite{knuth,giumale,cormen,wilf,bovet,mogos,affeldt,mala} have been introduced, but the fundamental concept has remained essentially the same: the definition of an asymptotic notation is based on measuring different complexity functions against a given reference function. Also, there have been efforts to try and automate the process of finding the complexity class for a given program~\cite{phalke,vaz}.

The r-Complexity model~\cite{rc} is an alternative asymptotic notation to the Bachmann-Landau notations, that offers better complexity feedback for similar programs and provides subtle insights, even for algorithms that are part of the same conventional complexity class. The model aims to address some of the shortfalls of the traditional model, such as providing a solution for making discrepancy between two similar algorithms with two similar complexity functions, $v,w:\mathbb{N}\longrightarrow\mathbb{R}$, for which that there exists $g$, such that $v(n) \in \Theta(g(n))$ and $w(n) \in \Theta(g(n))$. 

In \textbf{Section~\ref{motivation}}, we present the motivation of our research and illustrate a scenario where two algorithms solving the same problem would have been grouped into the same complexity class in the traditional analysis, despite their vastly different performance in practice. Then, in \textbf{Section~\ref{definitions}}, we present some formal definitions and associated corollaries for the various r-Complexity classes, including Big r-Theta, Big r-O and Big r-Omega, as well as presenting alternative criteria for admittance in a given set. \textbf{Section~\ref{properties}} explores properties like reflexivity, transitivity, symmetry, and projections within the context of r-Complexity. \textbf{Section~\ref{addition}} examines the behaviour of r-Complexity classes under addition and finally, in \textbf{Section~\ref{conclusion}}, we summarize the key results of the research.

\section{Motivation}
\label{motivation}

The r-Complexity model demonstrates value by analysing dominant constants, revealing significant differences in some algorithms' behaviour, that can substantially impact the total execution time. To help software engineers collaborate and share knowledge more directly, by discussing and comparing algorithms using the r-Complexity asymptotic notation, this paper presents a set of common properties of calculus within the r-Complexity classes, that can be used to obtain a faster pace of operating with complexities. 

In the traditional Bachmann-Landau notations, the Big-Theta definition~\cite{knuth,giumale,cormen} states that a function $f(n)$ belongs to the $\Theta(g(n))$ complexity set if and only if two positive constants, denoted by $c_1$ and $c_2$, exists, as well as a positive integer $n_0$, such that $f(n)$ is bounded between $g(n)$, multiplied by $c_1$ (lower bound) and $c_2$ (upper bound) constants, for any value of $n$ greater than or equal to $n_0$:

\[\Theta(g(n)) = \{ f(n) \mid \exists c_1, c_2 \in \mathbb{R}^{*}_{+}, \exists n_0 \in \mathbb{N}^{*} \ s.t. \  c_1 \cdot g(n) \leq f(n) \leq c_2 \cdot g(n), \ \forall n \geq n_0 \}\]

To highlight potential situations where the lack of distinction can mislead developers, consider two algorithms, $Alg1$ and $Alg2$, that solve the same problem are defined by different complexity functions: $f_{1},f_{2}:\mathbb{N}\longrightarrow\mathbb{R}$, with the properties that the functions defer exactly by one constant: $f_{1} = x \cdot f_{2}$, with $x \in \mathbb{R}_{+}, \ x > 1$. 

These complexities functions will imply that the two algorithms would belong to the same Big-Theta complexity set, because there exists $g$, with $f_{2} \in \Theta(g(n))$, such as:

\[ \exists c_{1}, c_{2} \in \mathbb{R}^{*}_{+}, \exists n_{0} \in \mathbb{N}^{*}\ s.t.\ \ c_{1} \cdot g(n) \leq f_{2}(n) \leq c_{2} \cdot g(n)\ ,\  \forall n \geq n_{0} \]

holds for $ c_{1}^{'}, c_{2}^{'} \in \mathbb{R}^{*}_{+}, n_{0}^{'} \in \mathbb{N}^{*}$, where: 

$
\left\{
\begin{aligned}
c_{1}^{'} &= x \cdot c_{1} \\
c_{2}^{'} &= x \cdot c_{2} \\
n_{0}^{'} &= n_{0} 
\end{aligned}
\right.
$

because $\forall n \geq n_{0}^{'}$: \[c_{1}^{'} \cdot g(n) = x \cdot c_{1} \cdot g(n) \leq x\cdot f_{2}(n) = f_{1}(n) = x\cdot f_{2}(n) \leq x \cdot c_{2} \cdot g(n) = c_{2}^{'} \cdot g(n)\]

This implies:

\[
c_{1}^{'} \cdot g(n) \leq f_{1}(n) \leq c_{2}^{'} \cdot g(n)
\]

and therefore, based to the definition of Big Theta, $f_{2} \in \Theta(g(n)) \Rightarrow f_{1} \in \Theta(g(n))$. 

Similarly, $f_{1} \in \Theta(g(n)) \Rightarrow f_{2} \in \Theta(g(n))$ for $c_{1}^{'} = \dfrac{1}{x} \cdot c_{1}$, $c_{2}^{'} = \dfrac{1}{x} \cdot c_{2}$ and $n_{0}^{'} = n_{0}$.

Remark that even if $f_{1} > f_{2}$, both complexity functions are part of the same complexity class. This observation implies that two algorithms, whose complexity functions can differ by a constant, even as high as $2024^{2024}$, are part of the same complexity class, even if the actual run-time might differ by over five-thousands orders of magnitude. For comparison, only a $30$ magnitude order between the two complexity functions $f_{1} = 10^{30} \cdot f_{2}$, signify
that if for a given input $n$, if $Alg2$ ends execution in 1 \textit{attosecond} ($10^{-9}$ part of a nanosecond), then $Alg1$ is expected to end execution in about 3 \textit{millenniums}. Despite of the colossal difference in time, classical complexity model is not perceptive between these subset of algorithms, that have their complexity functions differing by only some constants. In general, it is easier to evaluate the algorithm's effectiveness and applicability for various scenarios when one is aware of these complexity features.

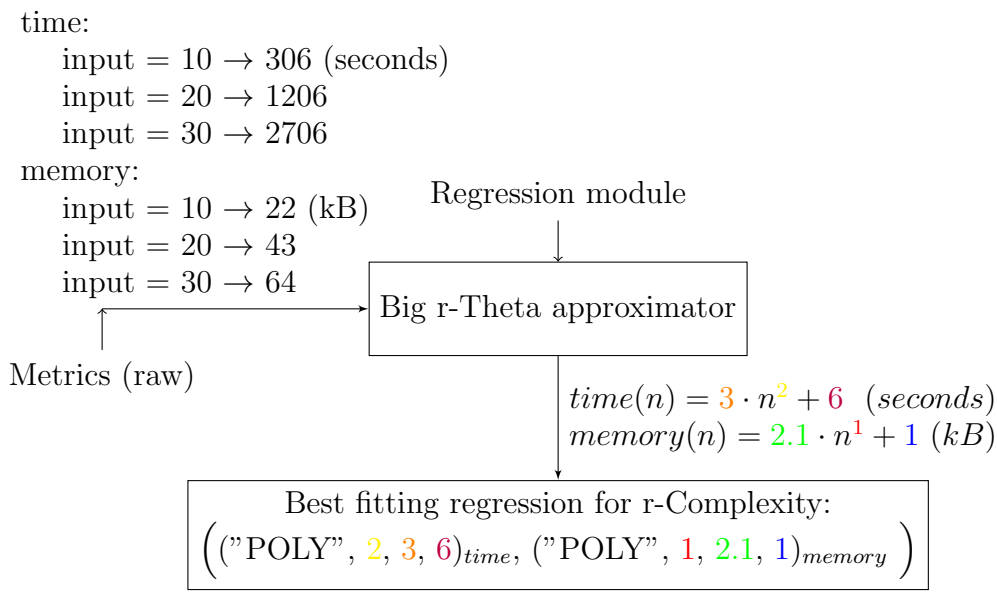
\begin{figure}
\centering
\begin{tikzpicture}[auto, node distance=3cm,>=latex']
    \node [input, name=input, pin={[pinstyle]below:Metrics (raw)}] {};
    \node [block, right of=input, pin={[pinstyle]above:Regression module},  node distance=6cm, align=center] (system) {Big r-Theta approximator};
    \node [block, below of=system,  node distance=3cm, align=center] (birthmark) {Best fitting regression for r-Complexity: \\
    $\Bigl($("POLY", \textcolor{yellow}{2}, \textcolor{orange}{3}, \textcolor{purple}{6})$_{time}$, ("POLY", \textcolor{red}{1}, \textcolor{green}{2.1}, \textcolor{blue}{1})$_{memory}$ $\Bigl)$
    };

    \draw [->] (input) -- node [name=p, align=left] { 
    time:\\
    \ \ \ \ input = 10 $\rightarrow$ 306 (seconds) \\
    \ \ \ \ input = 20 $\rightarrow$ 1206 \\
    \ \ \ \ input = 30 $\rightarrow$ 2706 \\
    memory:\\
    \ \ \ \ input = 10 $\rightarrow$ 22 (kB) \\
    \ \ \ \ input = 20 $\rightarrow$ 43 \\
    \ \ \ \ input = 30 $\rightarrow$ 64 }(system);

    \draw [->] (system) -- node [name=p, align=center] {
    $time(n) = \textcolor{orange}{3} \cdot n^{\textcolor{yellow}{2}} + \textcolor{purple}{6}\ \  (seconds)$ \\
    $memory(n) = \textcolor{green}{2.1} \cdot n^{\textcolor{red}{1}} + \textcolor{blue}{1}\ (kB)$ 
    } (birthmark); 

\end{tikzpicture}

\caption{An example of the regression fitting process in~\cite{cbce}, with a set of two raw metrics: temporarily and space. The goal of the research is to approximate, using simplified functions, the actual Big r-Theta complexity function that models program behaviour across multiple performance metrics, which later can be used to build complexity-based code embeddings. One part of the embedding that analyses the time complexity, \textsf{POLY 2 3 6}, denotes a second degree polynomial, with the coefficient $3$ for the squared term of the polynomial and the intercept $6$.}
\label{fig:regression-module}
\end{figure}

Some applications of r-Complexity have been presented in~\cite{rc}, which has challenged the common belief that Strassen's algorithm, with its lower asymptotic complexity would always outperform a traditional matrix multiplication for any input size. The result in the paper has shown that for input sizes less than $25$ million element, a cache-friendly loop ordering algorithm would finish faster due, despite the increase in complexity. The same paper presents an estimation of the total time using a brute-force algorithm for generating all possible episodes to around $10^{123}$ seconds, based on the associated r-Complexity function. This has been the core argument for arguing that solving the game of chess perfectly, by brute force algorithms, is currently infeasible, due to the limitations of available computing power. This argument could have not been established by using the traditional complexity models, for finite inputs.

Another area where r-Complexity has proven useful~\cite{cbce} is in building code-embeddings, representative of the program logic, that would later be used to classify algorithms into different categories, such as greedy, brute force or divide and conquer. The embedding calculus is based on the idea of finding the coefficients of the associated r-Complexity function, estimated by fitting a regression model to the metrics obtained from the profilers during the execution of algorithm, as a function of its input size. For example, a real use-case presented in Figure~\ref{fig:regression-module}, indicates, via experimental data, that the algorithm's execution time is proportional to the square of the input size and the amount of memory used by the algorithm grows proportionally to the size of the input data. Using r-Complexity models for estimating the Big r-Theta class, we can have insights into what the proportion is, such as insights that the execution time is scaling three times with the square of the input size, or that the memory scales $2.1$ times as fast as the input size.

\section{Definition of the r-Complexity sets}
\label{definitions}

The definition of the r-Complexity sets have initially been introduced in~\cite{rc}. In this paper, we extend the expressively of these statements by associating corollaries, build upon the knowledge described in the definition, which will highlight important consequences in asymptotic calculus. The validity of the corollaries presented in this section follows directly from the definition of the classes.

Calculus in r-Complexity can be performed either using limits of sequences or limits of functions. Consider any two complexity functions $f,g:\mathbb{N}_{+}\longrightarrow\mathbb{R}$. The following notations and terminology will be used to describe the asymptotic behaviour of an algorithm's complexity, as characterized by a function $f:\mathbb{N}\longrightarrow\mathbb{R}$. We define the set of all complexity calculus $\mathcal{F}= \lbrace f:\mathbb{N}\longrightarrow\mathbb{R} \rbrace$.

\begin{definition}
    Big \textit{r-}Theta has been defined as the set that represents the group of mathematical functions that have the same growth rate as a function $g$, in the context of asymptotic analysis. More formally, the Big \textit{r-}Theta r-Complexity class has been defined~\cite{rc} as:
    \[\begin{split}
          \Theta_{r}(g(n)) = \lbrace f \in \mathcal{F}\ |\ \forall c_{1}, c_{2} \in \mathbb{R}^{*}_{+} \ s.t. c_{1}< r < c_{2} , \exists n_{0} \in \mathbb{N}^{*}\ \\ s.t.\ \ c_{1} \cdot g(n) \leq f(n) \leq c_{2} \cdot g(n)\ ,\  \forall n \geq n_{0} \rbrace
    \end{split} \]
\end{definition}

\begin{corollary}
    Criteria for a function $f$ to be in the Big \textit{r-}Theta complexity class defined by a function $g$ is a direct result of the following:
    \[ f \in \Theta_{r}(g(n)) \Leftrightarrow \lim_{n\to\infty} \dfrac{f(n)}{g(n)} = r \]
\end{corollary}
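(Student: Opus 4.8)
The plan is to prove the biconditional by unwinding the definition of $\Theta_{r}(g(n))$ and recognizing that the stated condition on the constants $c_1, c_2$ is precisely the $\varepsilon$-$n_0$ characterization of a limit. First I would establish the forward direction ($\Rightarrow$). Assume $f \in \Theta_{r}(g(n))$. The key observation is that the definition requires the sandwich $c_1 \cdot g(n) \leq f(n) \leq c_2 \cdot g(n)$ to hold for \emph{every} pair $c_1, c_2$ satisfying $c_1 < r < c_2$. I would fix an arbitrary $\varepsilon > 0$ and instantiate the definition with $c_1 = r - \varepsilon$ and $c_2 = r + \varepsilon$ (shrinking $\varepsilon$ if needed so that $c_1$ remains positive, which is legitimate since $r \in \mathbb{R}^{*}_{+}$). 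Dividing the sandwich inequality through by $g(n)$ then yields $r - \varepsilon \leq f(n)/g(n) \leq r + \varepsilon$ for all $n \geq n_0$, which is exactly the statement that $\lim_{n\to\infty} f(n)/g(n) = r$.

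For the reverse direction ($\Leftarrow$), I would assume $\lim_{n\to\infty} f(n)/g(n) = r$ and verify membership in $\Theta_{r}(g(n))$. Here I take an arbitrary pair $c_1, c_2$ with $c_1 < r < c_2$ and set $\varepsilon = \min\{r - c_1,\, c_2 - r\} > 0$. By definition of the limit there is an $n_0$ such that $|f(n)/g(n) - r| < \varepsilon$ for all $n \geq n_0$, and this inequality forces $c_1 < f(n)/g(n) < c_2$, which upon multiplying by $g(n)$ recovers the required bound $c_1 \cdot g(n) \leq f(n) \leq c_2 \cdot g(n)$. Since $c_1, c_2$ were arbitrary, $f \in \Theta_{r}(g(n))$.

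The main obstacle I anticipate is the handling of the sign of $g(n)$ when dividing the inequalities: the step from $c_1 \cdot g(n) \leq f(n) \leq c_2 \cdot g(n)$ to the bound on the ratio $f(n)/g(n)$ is only valid if $g(n) > 0$ for all sufficiently large $n$, and likewise the reverse multiplication. Since the codomain is $\mathbb{R}$ rather than $\mathbb{R}_{+}$, I would either invoke the standing convention (implicit in the r-Complexity framework) that reference complexity functions $g$ are eventually positive, or explicitly restrict attention to that case and note that the ratio $f(n)/g(n)$ is well-defined precisely under this assumption. I would state this positivity hypothesis clearly at the outset so that every division and multiplication by $g(n)$ preserves the direction of the inequalities; once that is fixed, both directions reduce to the standard translation between the $\varepsilon$-neighbourhood formulation of a limit and the two-sided constant bound, and the argument is routine.
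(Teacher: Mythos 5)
Your proof is correct and follows exactly the route the paper intends: the paper offers no explicit proof, stating only that the corollary ``follows directly from the definition,'' and your $\varepsilon$--$n_0$ translation between the universally quantified sandwich bound and the limit is precisely that direct verification. Your explicit handling of the eventual positivity of $g(n)$ is a point the paper silently glosses over, and stating it as a hypothesis is the right call.
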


We will use the same definition for the Big \textit{r-}O, Big \textit{r-}Omega, Small \textit{r-}O and Small \textit{r-}Omega as presented in~\cite{rc}, with the remark that the last two models are defined for completeness reasons only, as they don't provide any novelty opposed to the traditional notations~\cite{papadimitriou,wegener}. Similar to Big \textit{r-}Theta, following the definitions of Big \textit{r-}O, Big \textit{r-}Omega, Small \textit{r-}O and Small \textit{r-}Omega classes, the following corollaries are direct implications:

\begin{corollary}
    Criteria for admittance of $f$ in Big \textit{r-}O class defined by $g$:
    \[ f \in \mathcal{O}_{r}(g(n)) \Leftrightarrow \lim_{n\to\infty} \dfrac{f(n)}{g(n)} = l,\ l \in \left[ 0, r \right] \]
\end{corollary}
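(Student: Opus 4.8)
The plan is to establish the stated equivalence by a standard double implication, mirroring the limit criterion already obtained for $\Theta_r$ in the preceding corollary and leaning on the definition of $\mathcal{O}_r(g(n))$ from~\cite{rc}, which—running parallel to Big \textit{r-}Theta but retaining only the upper bound—reads: $f \in \mathcal{O}_r(g(n))$ if and only if for every $c \in \mathbb{R}^{*}_{+}$ with $r < c$ there exists $n_0 \in \mathbb{N}^{*}$ such that $f(n) \leq c \cdot g(n)$ for all $n \geq n_0$. Throughout I would adopt the same standing hypotheses as in the $\Theta_r$ criterion, namely that $g(n) > 0$ for large $n$ and that the ratio $f(n)/g(n)$ admits a limit $l$ as $n \to \infty$; these are exactly the regularity conditions under which the limit formulation is meaningful.

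For the reverse implication ($\Leftarrow$) I would start from $\lim_{n\to\infty} f(n)/g(n) = l$ with $l \in [0,r]$. Fixing an arbitrary $c > r$, I note that $l \leq r < c$, so by the definition of the limit there is an index $n_0$ beyond which $f(n)/g(n) < c$; multiplying through by $g(n) > 0$ gives $f(n) \leq c \cdot g(n)$ for all $n \geq n_0$. Since $c > r$ was arbitrary, this is precisely the defining condition for membership in $\mathcal{O}_r(g(n))$. This direction is routine.

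The forward implication ($\Rightarrow$) is where the only real subtlety lies, and I would treat it last. Assuming $f \in \mathcal{O}_r(g(n))$, the definition furnishes, for each fixed $c > r$, an index $n_0$ with $f(n)/g(n) \leq c$ for all $n \geq n_0$; since non-strict inequalities pass to the limit, this gives $l \leq c$. As this holds for every $c > r$, taking the infimum over such $c$ yields $l \leq r$, while the standing sign assumptions on $f$ and $g$ give $l \geq 0$, so $l \in [0,r]$. The point requiring care is that the defining condition uses the \emph{strict} inequality $r < c$, so it never directly asserts $f(n)/g(n) \leq r$—it only confines the ratio below every threshold strictly exceeding $r$, which is the content ``$\limsup f/g \leq r$''. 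The genuine obstacle is therefore converting this upper bound into the stated value $l$: this conversion relies on the existence of $\lim_{n\to\infty} f(n)/g(n)$, exactly the convention already in force for the $\Theta_r$ criterion. Once that is granted, both boundary cases are seen to be admissible, namely $l = r$ (consistent with $\Theta_r \subset \mathcal{O}_r$) and $l = 0$ (the little-\textit{o} regime), which completes the equivalence.
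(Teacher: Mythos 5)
Your proof is correct, but note that the paper itself gives no explicit argument for this corollary: in the section where the corollaries are stated it only asserts that ``the validity of the corollaries presented in this section follows directly from the definition of the classes.'' Your double implication is therefore the natural fleshing-out of what the paper leaves implicit, and your reverse direction ($\lim_{n\to\infty} f(n)/g(n) = l \in [0,r]$ implies membership) is exactly the routine verification the authors presumably had in mind. Where you go beyond the paper is in isolating the real content of the forward direction: the definition of $\mathcal{O}_{r}(g(n))$ only confines the ratio below every threshold $c > r$, which yields $\limsup_{n\to\infty} f(n)/g(n) \leq r$ and nothing more, so the stated equivalence is literally false without the standing convention that $\lim_{n\to\infty} f(n)/g(n)$ exists and that $f$, $g$ are eventually positive (which is what gives $l \geq 0$); for instance, a ratio oscillating between $0$ and $r$ puts $f$ in $\mathcal{O}_{r}(g(n))$ while admitting no limit at all. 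This convention is genuinely automatic for the $\Theta_{r}$ criterion, where the two-sided sandwich over all $c_{1} < r < c_{2}$ forces the limit to exist and equal $r$, but it is not automatic for the one-sided $\mathcal{O}_{r}$ condition, so your explicit flagging of the hypothesis is a correction of the paper's imprecision rather than a gap in your own argument.
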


\begin{corollary}
    Criteria for admittance of $f$ in Big \textit{r-}Omega class defined by $g$:
    \[ f \in \Omega_{r}(g(n)) \Leftrightarrow \lim_{n\to\infty} \dfrac{f(n)}{g(n)} = l,\ l \in \left[ r, \infty \right) \]
\end{corollary}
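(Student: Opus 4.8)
The plan is to prove the biconditional directly from the definition of the Big \textit{r-}Omega class, whose lower-bound form (recalled from~\cite{rc} in analogy with the Big \textit{r-}Theta definition) reads
\[ \Omega_{r}(g(n)) = \lbrace f \in \mathcal{F} \mid \forall c_{1} \in \mathbb{R}^{*}_{+} \text{ s.t. } c_{1} < r,\ \exists n_{0} \in \mathbb{N}^{*} \text{ s.t. } c_{1} \cdot g(n) \leq f(n),\ \forall n \geq n_{0} \rbrace. \]
As throughout the paper, I would adopt the standing hypotheses that $g(n) > 0$ for all sufficiently large $n$, so that the quotient $f(n)/g(n)$ is eventually defined, and that the limit $l = \lim_{n\to\infty} f(n)/g(n)$ exists (the same assumption under which the Big \textit{r-}Theta corollary is stated). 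The whole argument then reduces to translating the universally quantified constant condition into the single inequality $l \geq r$.

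For the ($\Leftarrow$) direction I would assume $l \in [r, \infty)$ and fix an arbitrary admissible constant $c_{1} < r$. Since $c_{1} < r \leq l$, the gap $l - c_{1}$ is strictly positive, so invoking the definition of the limit with this gap as tolerance produces an index $n_{0}$ beyond which $f(n)/g(n) > c_{1}$; multiplying through by $g(n) > 0$ yields $c_{1} \cdot g(n) \leq f(n)$ for all $n \geq n_{0}$. As $c_{1} < r$ was arbitrary, $f$ satisfies every instance of the defining condition, hence $f \in \Omega_{r}(g(n))$.

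For the ($\Rightarrow$) direction I would assume $f \in \Omega_{r}(g(n))$ and read the membership backwards: for each $c_{1} < r$ there is an $n_{0}$ with $f(n)/g(n) \geq c_{1}$ for all $n \geq n_{0}$, whence $\liminf_{n\to\infty} f(n)/g(n) \geq c_{1}$. Letting $c_{1} \uparrow r$ forces $\liminf_{n\to\infty} f(n)/g(n) \geq r$, and since the limit is assumed to exist this gives $l \geq r$, i.e. $l \in [r, \infty)$. A useful consistency check is that intersecting this criterion with the Big \textit{r-}O criterion $l \in [0, r]$ returns $l = r$, reproducing the Big \textit{r-}Theta corollary through $\Theta_{r} = \mathcal{O}_{r} \cap \Omega_{r}$.

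The step I expect to demand the most care is the behaviour at the two endpoints of $[r, \infty)$. The closed endpoint at $r$ comes for free, because the definition only ever tests constants $c_{1}$ strictly below $r$, so the equality case $l = r$ still clears every instance; the genuine delicacy is the open endpoint at $\infty$. The bare lower-bound definition also admits functions with $f(n)/g(n) \to \infty$, so to obtain an exact equivalence I would either restrict to quotients with a finite limit (consistent with the standing hypothesis above) or explicitly relegate the divergent case to the Small \textit{r-}Omega class. Pinning down this choice, rather than the limit manipulations themselves, is where the argument must be made airtight.
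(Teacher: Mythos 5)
Your proposal is correct, and it is in fact more than the paper provides: the paper offers no proof of this corollary at all, stating only that ``the validity of the corollaries presented in this section follows directly from the definition of the classes.'' Your argument is the natural direct route that assertion gestures at --- recalling the lower-bound definition $\forall c_{1} < r,\ \exists n_{0}$ s.t. $c_{1} \cdot g(n) \leq f(n)$ for $n \geq n_{0}$, then proving ($\Leftarrow$) by applying the limit definition with tolerance $l - c_{1} > 0$, and ($\Rightarrow$) by the $\liminf$ bound followed by $c_{1} \uparrow r$ --- and both directions are sound as written. Your closing observation is a genuine contribution rather than a quibble: the equivalence as literally stated in the paper is not exact, since a function with $f(n)/g(n) \to \infty$ (e.g.\ $f(n) = n$, $g(n) = 1$) satisfies every instance of the defining lower bound and hence lies in $\Omega_{r}(g(n))$, yet has no finite limit $l \in \left[ r, \infty \right)$; likewise membership does not force the ratio to converge at all. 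So the ``$\Rightarrow$'' direction genuinely requires the standing hypothesis you adopt (the limit exists and is finite), or else the interval must be read in the extended reals as $\left[ r, \infty \right]$. The paper silently assumes this throughout its corollaries; you made the assumption explicit, which is exactly where the statement needed tightening.
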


The criteria for admittance of $f$ in both Small \textit{r-}O and Small \textit{r-}Omega complexity classes are unchanged from the traditional notations.

\begin{corollary}
    Criteria for admittance of $f$ in Small \textit{r-}O class defined by $g$:
    \[ f \in o_{r}(g(n)) \Leftrightarrow \lim_{n\to\infty} \dfrac{f(n)}{g(n)} = 0 \]
\end{corollary}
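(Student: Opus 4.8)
The plan is to unfold the definition of the Small \textit{r-}O set, which by assumption coincides with the classical small-o definition, and then to show that its defining condition is logically equivalent to the statement that the ratio $f(n)/g(n)$ tends to $0$. Concretely, membership $f \in o_{r}(g(n))$ means that for every $c \in \mathbb{R}^{*}_{+}$ there is a threshold $n_{0} \in \mathbb{N}^{*}$ beyond which $f(n) \leq c \cdot g(n)$ holds. The whole argument rests on recognizing that this universally-quantified family of eventual bounds is precisely the $\varepsilon$--$n_{0}$ formulation of the limit $\lim_{n\to\infty} f(n)/g(n) = 0$, once we divide through by $g(n)$. Since the preceding corollaries (Big \textit{r-}Theta, Big \textit{r-}O, Big \textit{r-}Omega) were all obtained by the same definition-to-limit transcription, I would keep this proof in that same style for consistency.

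For the forward implication, I would fix an arbitrary $c > 0$ and use the definition to obtain $n_{0}$ such that $0 \leq f(n)/g(n) \leq c$ for all $n \geq n_{0}$; since $c$ was arbitrary, this forces $\limsup_{n\to\infty} f(n)/g(n) \leq c$ for every positive $c$, hence the limit superior is $0$, and together with the nonnegativity of the ratio the ordinary limit equals $0$.

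For the reverse implication, I would start from $\lim_{n\to\infty} f(n)/g(n) = 0$ and apply the $\varepsilon$-definition of the limit with the arbitrary tolerance $c$ playing the role of $\varepsilon$: for each $c$ there exists $n_{0}$ with $f(n)/g(n) < c$, that is $f(n) < c \cdot g(n)$, for all $n \geq n_{0}$, which is exactly the membership condition. Matching the quantifier ``$\forall c$'' against ``$\forall \varepsilon$'' closes the equivalence in both directions.

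The main obstacle will be the bookkeeping around the sign and eventual positivity of $g(n)$: dividing the inequality $f(n) \leq c \cdot g(n)$ by $g(n)$ only preserves its direction when $g(n) > 0$, so I would first note that for the ratio to admit a finite limit we may restrict to indices where $g(n)$ is eventually positive (as is standard for complexity functions), and treat the nonnegativity of $f$ accordingly. Apart from this quantifier-and-sign bookkeeping, the statement is a direct restatement of the classical limit characterization of small-o, so no deeper machinery is required.
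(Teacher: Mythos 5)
Your proposal is correct and matches the paper's intent: the paper gives no explicit proof, stating only that the corollaries in Section~\ref{definitions} follow directly from the definitions (and that Small \textit{r-}O coincides with the classical small-o class), and your definition-to-limit transcription, with the quantifier matching of $c$ against $\varepsilon$, is exactly that direct argument. Your handling of the sign issue (eventual positivity of $g$ and nonnegativity of the ratio, needed so that the bounded-above condition actually yields a limit equal to $0$) is a reasonable piece of bookkeeping that the paper leaves implicit.
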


\begin{corollary}
    Criteria for admittance of $f$ in Small \textit{r-}Omega class defined by $g$:
    \[ f \in \omega_{r}(g(n)) \Leftrightarrow \lim_{n\to\infty} \dfrac{f(n)}{g(n)} = \infty \]
\end{corollary}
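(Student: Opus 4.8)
The plan is to unfold the definition of the Small \textit{r-}Omega class and verify that it is logically equivalent to the stated limit condition by a direct translation of quantifiers. Since the excerpt remarks that the criterion for admittance in $\omega_r$ is unchanged from the traditional notation, I would start from the classical definition: $f \in \omega_r(g(n))$ precisely when, for every constant $c \in \mathbb{R}^*_+$, there exists $n_0 \in \mathbb{N}^*$ such that $c \cdot g(n) < f(n)$ for all $n \geq n_0$. Throughout, I would invoke the standing assumption that $g$ is eventually positive, as is natural for a complexity function measuring resource usage, so that dividing by $g(n)$ is well-defined and preserves the direction of inequalities for all sufficiently large $n$.

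For the forward implication ($\Rightarrow$), suppose $f \in \omega_r(g(n))$. To establish $\lim_{n\to\infty} f(n)/g(n) = \infty$, I would fix an arbitrary bound $M > 0$ and apply the defining property with the choice $c = M$, obtaining an index $n_0$ beyond which $M \cdot g(n) < f(n)$. Dividing by the eventually positive $g(n)$ yields $f(n)/g(n) > M$ for all $n \geq n_0$, which is exactly the assertion that the ratio diverges to $+\infty$.

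For the reverse implication ($\Leftarrow$), suppose $\lim_{n\to\infty} f(n)/g(n) = \infty$. To recover the membership condition, I would fix an arbitrary $c \in \mathbb{R}^*_+$ and invoke the definition of a limit equal to $+\infty$ with threshold $M = c$, producing an $n_0$ such that $f(n)/g(n) > c$ for all $n \geq n_0$. Multiplying through by $g(n)$ gives $c \cdot g(n) < f(n)$, which is precisely the admittance requirement for the chosen $c$; as $c$ was arbitrary, $f \in \omega_r(g(n))$.

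I expect no genuine obstacle here, since the two statements are the same condition expressed in interchangeable languages, with $c$ and $M$ playing identical roles. The only point demanding minor care is the eventual positivity of $g$, which is what lets the ratio be defined and prevents multiplication or division by $g(n)$ from reversing the inequalities; intersecting the threshold index $n_0$ with the index beyond which $g$ is positive resolves this cleanly.
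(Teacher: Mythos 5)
Your proof is correct and takes exactly the route the paper intends: the paper gives no explicit proof, stating only that the corollaries ``follow directly from the definition of the classes,'' and your quantifier-by-quantifier unfolding of the classical small-omega definition (with $c$ and $M$ identified in the two directions) is precisely that direct verification. Your added care about the eventual positivity of $g$ is a reasonable standing assumption for complexity functions and fills in the only detail the paper leaves implicit.
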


\section{Common properties in r-Complexity}
\label{properties}

This chapter will present new implications in terms of \textit{reflexivity, transitivity, symmetry and projections} in r-Complexity, and it will draw the comparison with equivalent properties within the conventional Bachmann–Landau notations, already well analysed and established in the literature ~\cite{mogos,affeldt,mala,goloveshkin}. By recognizing reflexivity, symmetry and transpose properties in r-Complexity, we consider that some problems can be solved more efficiently in this model. To establish the results in this section, we will rely on the definitions introduced in Section~\ref{definitions}, and the foundational model described in~\cite{rc}.

\begin{theorem}
    Reflexivity in r-Complexity for Big r-Theta notation  \[ f \in \Theta_{r} \left( \frac{1}{r} \cdot f(n) \right)\ \forall r \neq 0 \]
\end{theorem}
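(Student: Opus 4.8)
The plan is to reduce the membership claim to the limit criterion supplied by the first corollary, which asserts that $f \in \Theta_{r}(g(n))$ if and only if $\lim_{n\to\infty} f(n)/g(n) = r$. Taking the reference function to be $g(n) = \frac{1}{r}\cdot f(n)$, the entire verification then collapses to the evaluation of a single limit, so the argument will be short.

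First I would form the quotient $\dfrac{f(n)}{(1/r)\cdot f(n)}$. At every index $n$ for which $f(n)\neq 0$, the common factor $f(n)$ cancels and the quotient reduces to the constant $r$, independently of $n$. Consequently the ratio is identically equal to $r$ along the tail of the sequence, so $\lim_{n\to\infty}\dfrac{f(n)}{(1/r)\cdot f(n)} = r$ holds trivially. By the corollary this is exactly the admittance condition for $f \in \Theta_{r}\!\left(\frac{1}{r}\cdot f(n)\right)$, which is the assertion of the theorem. As an alternative route, I could argue straight from the definition: for any admissible pair $c_{1}<r<c_{2}$ with $c_{1},c_{2}\in\mathbb{R}^{*}_{+}$, dividing the target inequality $c_{1}\cdot\frac{1}{r}f(n) \leq f(n) \leq c_{2}\cdot\frac{1}{r}f(n)$ through by $f(n)>0$ reduces it to $\frac{c_{1}}{r}\leq 1 \leq \frac{c_{2}}{r}$, which is guaranteed by the choice of $c_{1},c_{2}$; hence the bound holds for every $n$ and any $n_{0}$ works.

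The hypothesis $r\neq 0$ is precisely what makes $\frac{1}{r}$ well defined and renders $\frac{1}{r}f$ a legitimate scaled reference function, so no further restriction (in particular none on the sign of $r$) is required; the cancellation above is insensitive to the sign. I expect the only point demanding care to be the well-definedness of the quotient, namely that $f(n)$ be eventually nonzero, which is implicit in treating $f$ as a genuine complexity function whose growth is being measured. That is a matter of bookkeeping rather than any substantive analytic difficulty, so I anticipate no real obstacle in carrying the proof through.
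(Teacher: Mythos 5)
Your proposal is correct, and your primary route is genuinely different from the paper's. The paper argues directly from the definition of $\Theta_{r}$: it substitutes $\frac{1}{r}\cdot f(n)$ as the reference function, rescales the constants via $c_{1}' = \frac{1}{r}\cdot c_{1}$, $c_{2}' = \frac{1}{r}\cdot c_{2}$ so that the constraint $c_{1} < r < c_{2}$ becomes $c_{1}' < 1 < c_{2}'$, and then observes that $c_{1}'\cdot f(n) \leq f(n) \leq c_{2}'\cdot f(n)$ holds with $n_{0}=1$. Your first route instead invokes the limit criterion from the paper's first corollary and cancels $f(n)$ in the quotient $f(n)\big/\bigl(\tfrac{1}{r}f(n)\bigr) = r$; this is shorter and pushes all the work into the corollary, at the price of relying on a statement the paper itself gives without proof, and of needing $f(n)$ eventually nonzero. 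Your alternative route (dividing the target inequality by $f(n)>0$) is essentially the paper's proof in condensed form, so you have both arguments in hand. One caveat: your claim that no restriction on the sign of $r$ is needed is too quick. Your division step turns $c_{1} < r < c_{2}$ into $\frac{c_{1}}{r} \leq 1 \leq \frac{c_{2}}{r}$ only when $r > 0$, and the limit corollary is a genuine equivalence only for $r>0$; for $r<0$ the theorem holds only vacuously, because no constants $c_{1}\in\mathbb{R}^{*}_{+}$ with $c_{1}<r$ exist, so the membership condition quantifies over an empty set. Similarly, both your routes and the paper's implicitly assume $f$ is (eventually) positive --- the paper gestures at the signed case with an unresolved ``or'' branch. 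These are shared conventions of the r-Complexity model rather than defects specific to your argument, but they are worth stating explicitly rather than dismissing as bookkeeping.
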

\begin{proof}
    Based on the definition of $ \Theta_{r}(f(n))$
    \[\begin{split}
          \Theta_{r}(f(n)) = \lbrace f' \in \mathcal{F}\ |\ \forall c_{1}, c_{2} \in \mathbb{R}^{*}_{+} \ s.t. c_{1}< r < c_{2} , \exists n_{0} \in \mathbb{N}^{*}\ \\ s.t.\ \ c_{1} \cdot f(n) \leq f'(n) \leq c_{2} \cdot f(n)\ ,\  \forall n \geq n_{0} \rbrace
    \end{split} \]
    Using substitution $ f(n) \longleftarrow \frac{1}{r} \cdot f(n)$
    \[\begin{split}
          \Theta_{r} \left( \frac{1}{r} \cdot f(n) \right) = \lbrace f' \in \mathcal{F}\ |\ \forall c_{1}, c_{2} \in \mathbb{R}^{*}_{+} \ s.t. c_{1}< r < c_{2} , \exists n_{0} \in \mathbb{N}^{*}\ \\ s.t.\ \frac{1}{r} \cdot \ c_{1} \cdot f(n) \leq  f'(n) \leq \frac{1}{r} \cdot c_{2} \cdot f(n)\ ,\  \forall n \geq n_{0} \rbrace
    \end{split} \]
    By choosing $c_{1}' = \frac{1}{r} \cdot c_{1}, c_{2}' = \frac{1}{r} \cdot c_{2}$
    \[\begin{split}
          \Theta_{r} \left( \frac{1}{r} \cdot f(n) \right) = \lbrace f' \in \mathcal{F}\ |\ \forall c_{1}', c_{2}' \in \mathbb{R}^{*}_{+} \ s.t. c_{1}'< 1 < c_{2}' , \exists n_{0} \in \mathbb{N}^{*}\ \\ s.t.\  \ c_{1}' \cdot f(n) \leq  f'(n) \leq c_{2}' \cdot f(n)\ ,\  \forall n \geq n_{0} \rbrace
    \end{split} \]
    For any $x \in \mathbb{R}^{*}$,  $\forall c_{1}, c_{2}\ \ s.t. c_{1} \leq 1 \leq c_{2}$, we have $ x \cdot c_{1} \leq x \leq x \cdot c_{2} $.
    Thus, if $f:\mathbb{N}\longrightarrow\mathbb{R}$, $\forall c_{1}, c_{2}\ \  s.t. c_{1} \leq 1 \leq c_{2}$, we have $f(n) \cdot c_{1} \leq f(n) \leq f(n) \cdot c_{2}\ \ \forall n \in \mathbb{N}^{*}$ or $f(n) \cdot c_{1} \geq f(n) \geq f(n) \cdot c_{2}\ \ \forall n \in \mathbb{N}^{*}$

    Therefore:
    \[\forall c_{1}', c_{2}' \in \mathbb{R}^{*}_{+} \ s.t. c_{1}'< 1 < c_{2}' , \exists n_{0} = 1 s.t.\  \ c_{1}' \cdot f(n) \leq  f(n) \leq c_{2}' \cdot f(n)\ ,\  \forall n \geq n_{0}=1 \Rightarrow \]
    \[ f \in \Theta_{r} \left( \frac{1}{r} \cdot f(n) \right)\ \forall r \neq 0 \]
\end{proof}

\begin{theorem}
    Reflexivity in r-Complexity for Big r-O notation
    \[ f \in \mathcal{O}_{r} \left( x \cdot f(n) \right)\ \forall x \geq \dfrac{1}{r} \]
\end{theorem}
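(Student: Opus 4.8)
The plan is to reduce the statement to the limit-based admittance criterion for the Big \textit{r-}O class, which states that $f \in \mathcal{O}_{r}(g(n))$ if and only if $\lim_{n\to\infty} f(n)/g(n) = l$ for some $l \in [0, r]$. Taking $g(n) = x \cdot f(n)$, the whole argument collapses to evaluating a single limit and verifying one membership condition, so I would not need to manipulate the set-builder definition directly, as was done in the Big \textit{r-}Theta reflexivity proof.

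First I would form the quotient $\frac{f(n)}{x \cdot f(n)}$ and observe that the two copies of $f$ cancel, leaving the constant $\frac{1}{x}$ at every $n$ where $f(n) \neq 0$. Hence the limit exists and equals $\frac{1}{x}$, irrespective of how $f$ grows or of its sign. Next I would check that $l = \frac{1}{x}$ falls in $[0, r]$: since the definition forces $r > 0$ (it requires $c_{1} < r < c_{2}$ with $c_{1}, c_{2}$ positive), the hypothesis $x \geq \frac{1}{r}$ gives $x > 0$ and, upon inverting the inequality, $\frac{1}{x} \leq r$; combined with $\frac{1}{x} > 0$ this yields $l \in (0, r] \subseteq [0, r]$, which closes the verification.

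The step I expect to be the only genuine obstacle is the degenerate case where $f(n)$ vanishes for some (possibly infinitely many) $n$, so that the quotient $f/(xf)$ is formally undefined at those points. I would handle this by noting that the ratio equals the constant $1/x$ on the cofinite set where $f(n) \neq 0$, which is all the limit requires; alternatively, if one prefers to bypass limits entirely, one can argue straight from the set-builder definition of $\mathcal{O}_{r}$ exactly as in the preceding theorem, choosing the two bounding constants proportionally to $x$ so that the sandwich inequality holds for all $n \geq n_{0} = 1$.
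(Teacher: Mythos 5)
Your proof is correct, but it takes a genuinely different route from the paper's. The paper works directly with the set-builder definition of $\mathcal{O}_{r}$: it substitutes $x \cdot f(n)$ for the reference function, observes that $r < c$ together with $x \geq \frac{1}{r}$ forces $c \cdot x \geq 1$, and then exhibits the witness $n_{0} = 1$ for the inequality $f(n) \leq c \cdot x \cdot f(n)$ --- exactly the same pattern as the Big \textit{r-}Theta reflexivity proof preceding it. You instead invoke the limit-based admittance corollary, collapsing everything to the computation $\lim_{n\to\infty} \frac{f(n)}{x \cdot f(n)} = \frac{1}{x}$ and the inclusion $\frac{1}{x} \in (0, r] \subseteq [0, r]$. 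Your route is shorter and is actually cleaner with respect to signs: the quotient equals $\frac{1}{x}$ regardless of the sign of $f(n)$, whereas the paper's step ``$f(n) \leq c \cdot x \cdot f(n)$ since $c \cdot x \geq 1$'' silently assumes $f(n) \geq 0$. What you give up is self-containedness: your argument rests on the corollary (which the paper states without proof as a consequence of the definition) and on the existence of the limit, which is precisely where your degenerate case lives. Note one slip there: if $f(n) = 0$ for infinitely many $n$, the set where the quotient is defined is \emph{not} cofinite (contrary to your wording), so the limit criterion does not straightforwardly apply; your proposed fallback --- arguing from the set-builder definition with $n_{0} = 1$ --- is the correct repair, and it is in essence the paper's own proof.
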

\begin{proof}
    Based on the definition of $ \mathcal{O}_{r}(f(n))$
    \[\mathcal{O}_{r}(f(n)) = \lbrace f' \in \mathcal{F}\ |\ \forall c  \in \mathbb{R}^{*}_{+} \ s.t.\  r<c, \exists n_{0} \in \mathbb{N}^{*}\ s.t.\  f'(n) \leq c \cdot f(n),\  \forall n \geq n_{0} \rbrace\]
    Using substitution $ f(n) \longleftarrow x \cdot f(n), \forall x \geq \dfrac{1}{r}, \forall r \neq 0$
    \[\mathcal{O}_{r}(x \cdot f(n)) = \lbrace f' \in \mathcal{F}\ |\ \forall c  \in \mathbb{R}^{*}_{+} \ s.t.\  r<c, \exists n_{0} \in \mathbb{N}^{*}\ s.t.\  f'(n) \leq c \cdot x \cdot f(n),\  \forall n \geq n_{0} \rbrace\]
    If $r<c$ and $x \geq \dfrac{1}{r}$, then $c \cdot x \geq 1, \forall r,c,x\ \ r \neq 0$
    Thus, if $f:\mathbb{N}\longrightarrow\mathbb{R}$, we have $ f(n) \leq 1 \cdot f(n) \leq  c \cdot x\cdot f(n) \ \ \forall n \in \mathbb{N}^{*}$.

    Therefore:
    \[\forall x \geq \dfrac{1}{r}, \ \forall c \in \mathbb{R}^{*}_{+} \ s.t.\  r<c , \exists n_{0} = 1 s.t.\  \  f(n) \leq  c \cdot x\cdot f(n) \ \ \forall n \in \mathbb{N}^{*} ,\  \forall n \geq n_{0}=1 \Rightarrow \]
    \[ f \in \mathcal{O}_{r} \left( x \cdot f(n) \right)\ \forall x \geq \dfrac{1}{r} \]
\end{proof}

\begin{theorem}
    Reflexivity in r-Complexity for Big r-Omega notation
    \[ f \in \Omega_{r} \left( x \cdot f(n) \right)\ \forall x \leq \dfrac{1}{r} \]
\end{theorem}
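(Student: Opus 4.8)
The plan is to dualize the preceding Big r-O reflexivity argument, reversing each inequality to reflect the lower-bound character of the Big r-Omega class. First I would recall the set definition of $\Omega_{r}(f(n))$, which parallels the Big r-O definition with the admittance constant bounded \emph{above} by $r$ and the defining inequality reversed:
\[\Omega_{r}(f(n)) = \lbrace f' \in \mathcal{F}\ |\ \forall c \in \mathbb{R}^{*}_{+}\ s.t.\ c < r,\ \exists n_{0} \in \mathbb{N}^{*}\ s.t.\ f'(n) \geq c \cdot f(n),\ \forall n \geq n_{0} \rbrace.\]
The structural change from $\mathcal{O}_{r}$ is thus that $c$ ranges below $r$ rather than above it, and that the witness inequality is $\geq$ rather than $\leq$.

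Next I would apply the substitution $f(n) \longleftarrow x \cdot f(n)$ with $x \leq \frac{1}{r}$, turning the admittance condition into $f'(n) \geq c \cdot x \cdot f(n)$. The arithmetic core is the dual of the scaling lemma used for Big r-O: from $c < r$ and $0 < x \leq \frac{1}{r}$ one gets $c \cdot x \leq c \cdot \frac{1}{r} < r \cdot \frac{1}{r} = 1$, so $c \cdot x \leq 1$. Choosing $f' = f$ and $n_{0} = 1$, this yields $f(n) \geq c \cdot x \cdot f(n)$ for every admissible $c$, which is precisely the membership criterion, establishing $f \in \Omega_{r}(x \cdot f(n))$ for all $x \leq \frac{1}{r}$.

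The step I expect to be the main obstacle is the sign of $f(n)$. The implication $c \cdot x \leq 1 \Rightarrow f(n) \geq c \cdot x \cdot f(n)$ is immediate when $f(n) \geq 0$ but reverses when $f(n) < 0$, so as in the Big r-Theta proof I would phrase the scaling step as a case distinction, recording $f(n) \cdot (c \cdot x) \leq f(n)$ when $f(n) \geq 0$ and the reversed relation when $f(n) < 0$, and then argue that the $\geq$ orientation demanded by the Omega condition is the one recovered over the relevant range of $c \cdot x$ for complexity functions, which are eventually nonnegative. A minor point is that $x$ must be taken positive for the substitution to preserve inequality directions, consistent with the $x \in \mathbb{R}^{*}_{+}$ convention implicit in the companion reflexivity statements.
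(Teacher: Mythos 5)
Your proposal follows the paper's proof essentially step for step: recall the set definition of $\Omega_{r}(f(n))$, substitute $f(n) \longleftarrow x \cdot f(n)$, observe that $c < r$ together with $x \leq \frac{1}{r}$ forces $c \cdot x \leq 1$, and then take $f' = f$ with $n_{0} = 1$ to witness membership. Your additional care about the sign of $f(n)$ is slightly more rigorous than the paper, which applies $f(n) \geq c \cdot x \cdot f(n)$ without comment, but it does not change the approach.
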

\begin{proof}
    Based on the definition of $\Omega_{r}(f(n))$
    \[\Omega_{r}(f(n)) = \lbrace f' \in \mathcal{F}\ |\ \forall c  \in \mathbb{R}^{*}_{+} \ s.t.\  c<r, \exists n_{0} \in \mathbb{N}^{*}\ s.t.\  f'(n) \geq c \cdot f(n),\  \forall n \geq n_{0} \rbrace\]
    Using substitution $ f(n) \longleftarrow x \cdot f(n), \forall x \leq \dfrac{1}{r}, \forall r \neq 0$
    \[\Omega_{r}(x \cdot f(n)) = \lbrace f' \in \mathcal{F}\ |\ \forall c  \in \mathbb{R}^{*}_{+} \ s.t.\  c<r, \exists n_{0} \in \mathbb{N}^{*}\ s.t.\  f'(n) \geq x \cdot c \cdot f(n),\  \forall n \geq n_{0} \rbrace\]
    If $r>c$ and $x \leq \dfrac{1}{r}$, then $c \cdot x \leq 1, \forall r,c,x\ \ r \neq 0$

    Therefore:
    \[\forall x \leq \dfrac{1}{r}, \ \forall c \in \mathbb{R}^{*}_{+} \ s.t.\  c<r , \exists n_{0} = 1 s.t.\  \  f(n) \geq  c \cdot x\cdot f(n) \ \ \forall n \in \mathbb{N}^{*} ,\  \forall n \geq n_{0}=1 \Rightarrow \]
    \[ f \in \Omega_{r} \left( x \cdot f(n) \right)\ \forall x \leq \dfrac{1}{r} \]
\end{proof}

\begin{remark}
    Reflexivity does not hold in r-Complexity for small r-O and small r-Omega notation, as this set is the same as the classical sets defined in Bachmann–Landau notations, and reflexivity does not hold for Small Omega class.
    \[ f \notin o_{r}(f(n)) \]
    \[ f \notin \omega_{r}(f(n)) \]
\end{remark}

\begin{theorem}
    Transitivity in r-Complexity - Big r-Theta notation:
    
    \[ f \in \Theta_{r}(g(n)), g \in \Theta_{r'}(h(n)) \Rightarrow  f \in \Theta_{r \cdot r'}(h(n))\]
\end{theorem}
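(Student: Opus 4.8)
The plan is to reduce all three set-membership statements to limit conditions by invoking the corollary that characterizes Big r-Theta membership, namely $f \in \Theta_{r}(g(n)) \Leftrightarrow \lim_{n\to\infty} \frac{f(n)}{g(n)} = r$, and then to exploit the multiplicativity of limits. Under this translation the hypothesis $f \in \Theta_{r}(g(n))$ becomes $\lim_{n\to\infty} \frac{f(n)}{g(n)} = r$, the hypothesis $g \in \Theta_{r'}(h(n))$ becomes $\lim_{n\to\infty} \frac{g(n)}{h(n)} = r'$, and the conclusion $f \in \Theta_{r \cdot r'}(h(n))$ becomes, by the same corollary read in reverse, $\lim_{n\to\infty} \frac{f(n)}{h(n)} = r \cdot r'$. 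Thus the entire theorem collapses to a single limit identity.

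First I would secure the eventual nonvanishing of the intermediate denominator, so that the factorization is legitimate. Since $\lim_{n\to\infty} \frac{g(n)}{h(n)} = r' > 0$ (the r-parameters are strictly positive, being squeezed by positive constants $c_{1} < r < c_{2}$ in the definition), there exists an index beyond which both $g(n)$ and $h(n)$ are nonzero; the same positivity argument applied to $\lim \frac{f(n)}{g(n)} = r$ keeps $g$ valid as a denominator. For all sufficiently large $n$ I may therefore write the chain-type identity $\frac{f(n)}{h(n)} = \frac{f(n)}{g(n)} \cdot \frac{g(n)}{h(n)}$ with every quotient well defined.

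Then the key step is the product rule for limits: because both factor limits exist and are finite, the limit of the product equals the product of the limits, yielding
\[ \lim_{n\to\infty} \frac{f(n)}{h(n)} = \Bigl(\lim_{n\to\infty} \frac{f(n)}{g(n)}\Bigr)\Bigl(\lim_{n\to\infty} \frac{g(n)}{h(n)}\Bigr) = r \cdot r'. \]
Applying the Big r-Theta corollary once more in the reverse direction converts this limit identity back into the desired membership $f \in \Theta_{r \cdot r'}(h(n))$, which closes the argument.

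I expect the only genuine obstacle to be bookkeeping rather than substance: justifying that $g(n)$ and $h(n)$ do not vanish for large $n$, so that the cancellation and the product rule are applied without ever dividing by zero. Once the strict positivity of $r'$ is used to guarantee eventual nonvanishing of the intermediate function $g$, the remainder is immediate. An alternative, more laborious route would bypass limits entirely and instead compose the squeeze bounds from the two definitions directly, multiplying the two lower bounds and the two upper bounds to obtain bounds of the form $c_{1}c_{1}' \cdot h(n) \le f(n) \le c_{2}c_{2}' \cdot h(n)$ with the product constant straddling $r \cdot r'$; this merely re-derives the product rule by hand and offers no real advantage over the limit-based proof.
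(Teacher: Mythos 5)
Your proof is correct, but it takes a genuinely different route from the paper's. You reduce all three memberships to limit statements via the corollary $f \in \Theta_{r}(g(n)) \Leftrightarrow \lim_{n\to\infty} f(n)/g(n) = r$ and then invoke the product rule for limits, after checking that $g$ and $h$ are eventually nonzero so the factorization $\frac{f(n)}{h(n)} = \frac{f(n)}{g(n)} \cdot \frac{g(n)}{h(n)}$ is legitimate. The paper instead works directly from the definition: it takes arbitrary constants $c_{1} < r < c_{2}$ and $c_{1}' < r' < c_{2}'$, chains the two squeezes into $c_{1}'c_{1} \cdot h(n) \leq c_{1} \cdot g(n) \leq f(n) \leq c_{2} \cdot g(n) \leq c_{2}'c_{2} \cdot h(n)$ for all $n \geq \max(n_{0}, n_{0}')$, and observes that $c_{1}c_{1}' < r \cdot r' < c_{2}c_{2}'$ --- precisely the ``more laborious'' alternative you sketch and dismiss in your closing remark. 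Each approach buys something: yours is shorter and disposes of the universal quantifier over constant pairs automatically, since the corollary absorbs it, while the paper's argument is self-contained at the level of the definition, never divides by $g(n)$ or $h(n)$, and so needs no non-vanishing discussion at all. Conversely, your route leans on the limit criterion (stated in the paper as a corollary without proof), and the paper's route, to be fully airtight, still requires the observation --- left implicit there --- that \emph{every} pair of constants straddling $r \cdot r'$ can be realized as a product of pairs straddling $r$ and $r'$ respectively; your limit argument sidesteps that covering issue entirely.
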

\begin{proof}

$ f \in \Theta_{r}(g(n)) \Rightarrow \forall c_{1}, c_{2} \in \mathbb{R}^{*}_{+} \ s.t. c_{1}< r < c_{2} , \exists n_{0} \in \mathbb{N}^{*}\ \\ s.t.\ \ c_{1} \cdot g(n) \leq f(n) \leq c_{2} \cdot g(n)\ ,\  \forall n \geq n_{0} $ 
    
$ g \in \Theta_{r'}(h(n)) \Rightarrow \forall c_{1}, c_{2} \in \mathbb{R}^{*}_{+} \ s.t. c_{1} < r' < c_{2} , \exists n_{0}' \in \mathbb{N}^{*}\ \\ s.t.\ \ c_{1} \cdot h(n) \leq g(n) \leq c_{2} \cdot h(n)\ ,\  \forall n \geq n_{0}' $ 
    
Therefore: \[
\forall c_{1}, c_{2}, c_{1}', c_{2}' \in \mathbb{R}^{*}_{+} \ s.t. c_{1} < r < c_{2}, c_{1}' < r' < c_{2}' , \exists n''_{0}=max(n_{0}, n'_{0}) \in \mathbb{N}^{*}\ s.t.
\] 
\[c_{1}' \cdot c_{1} \cdot h(n) \leq c_{1} \cdot g(n) \leq f(n) \leq c_{2} \cdot g(n) \leq c_{2}' \cdot c_{2} \cdot h(n)\ ,\  \forall n \geq n''_{0} 
\]

For $c_{1}'' = c_{1} \cdot c_{1}' , c_{2}'' = c_{2} \cdot c_{2}'$, we have: $c_{1}'' < r \cdot r' < c_{2}''$.

As a consequence:

\[\forall c_{1}'', c_{2}'' \in \mathbb{R}^{*}_{+} \ s.t. c_{1}'' < r \cdot r' < c_{2}'' \ \ \exists n''_{0}=max(n_{0}, n'_{0}) \in \mathbb{N}^{*}\ \\ s.t.\] \[c_{1}'' \cdot h(n) \leq f(n) \leq c_{2}'' \cdot h(n)\ ,\  \forall n \geq n''_{0} \Rightarrow f \in \Theta_{r \cdot r'}(h(n))\]

\end{proof}

The remaining transitivity properties also hold within r-Complexity Calculus, and their proofs follow a similar structure to the one presented above:

\begin{theorem}
    Transitivity in r-Complexity - Big r-O notation:

    $ f \in \mathcal{O}_{r}(g(n)), g \in \mathcal{O}_{r}(h(n)) \Rightarrow  f \in \mathcal{O}_{r \cdot r'}(h(n))$
\end{theorem}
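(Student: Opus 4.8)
The plan is to mirror the Big r-Theta transitivity argument, but carrying only the upper bound, since the Big r-O definition
\[\mathcal{O}_{r}(g(n)) = \lbrace f' \in \mathcal{F}\ |\ \forall c \in \mathbb{R}^{*}_{+}\ s.t.\ r<c,\ \exists n_{0} \in \mathbb{N}^{*}\ s.t.\ f'(n) \leq c \cdot g(n),\ \forall n \geq n_{0} \rbrace\]
involves a single inequality rather than two. First I would fix the intended reading of the statement: the hypothesis should be $g \in \mathcal{O}_{r'}(h(n))$ (not $\mathcal{O}_{r}$), so that the composite index $r \cdot r'$ in the conclusion is consistent with the two supplied indices $r$ and $r'$.

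The direct approach unfolds both hypotheses. From $f \in \mathcal{O}_{r}(g(n))$ I obtain that for every $c > r$ there is $n_{0}$ with $f(n) \leq c \cdot g(n)$ for all $n \geq n_{0}$, and from $g \in \mathcal{O}_{r'}(h(n))$ that for every $c' > r'$ there is $n_{0}'$ with $g(n) \leq c' \cdot h(n)$ for all $n \geq n_{0}'$. Multiplying the second inequality by the positive constant $c$ and chaining gives $f(n) \leq c \cdot c' \cdot h(n)$ for all $n \geq \max(n_{0}, n_{0}')$, where $c \cdot c' > r \cdot r'$. To finish I must verify the admission criterion at index $r \cdot r'$: for an \emph{arbitrary} $c'' > r \cdot r'$, I need to produce a valid pair $(c, c')$ with $c > r$, $c' > r'$ and $c \cdot c' \leq c''$.

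The step I expect to be the main obstacle is precisely this factorisation of the arbitrary threshold $c''$, which has no analogue in the na\"ive "multiply the two bounds" reading and which the two-sided Big r-Theta proof quietly glosses over. The resolution is to observe that, since $r' > 0$ and $c'' > r \cdot r'$, the open interval $\left(r, \frac{c''}{r'}\right)$ is nonempty; choosing any $c$ in it and setting $c' = \frac{c''}{c}$ yields $c > r$, $c' > r'$ and $c \cdot c' = c''$ exactly. Feeding these into the two hypotheses then produces $f(n) \leq c'' \cdot h(n)$ for all $n \geq \max(n_{0}, n_{0}')$, establishing $f \in \mathcal{O}_{r \cdot r'}(h(n))$.

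An alternative, and in my view cleaner, route bypasses the constants entirely via the limit corollaries: $f \in \mathcal{O}_{r}(g(n))$ gives $\lim_{n\to\infty} \frac{f(n)}{g(n)} = l_{1} \in [0,r]$ and $g \in \mathcal{O}_{r'}(h(n))$ gives $\lim_{n\to\infty} \frac{g(n)}{h(n)} = l_{2} \in [0,r']$. Since both limits are finite, the product rule yields $\lim_{n\to\infty} \frac{f(n)}{h(n)} = l_{1} \cdot l_{2}$, and $l_{1} \cdot l_{2} \in [0, r \cdot r']$, so the Big r-O admission criterion gives $f \in \mathcal{O}_{r \cdot r'}(h(n))$ immediately. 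This concentrates the only real work in justifying the product-of-limits step, which is routine given the finiteness of $l_{1}$ and $l_{2}$.
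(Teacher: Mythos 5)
Your proposal is correct, and it is worth comparing to what the paper actually does, which is essentially nothing: for this theorem the paper gives no proof at all, stating only that the argument ``follows a similar structure'' to the Big r-Theta transitivity proof. Your direct route is that intended argument, but with two genuine improvements. First, you correctly repair the statement itself (the hypothesis must read $g \in \mathcal{O}_{r'}(h(n))$, otherwise the index $r'$ in the conclusion is unbound). Second, and more substantively, you identify and close the quantifier gap that the paper's own template proof glosses over: the paper's Big r-Theta argument only verifies the defining inequality for constants of the special product form $c_{1}c_{1}'$, $c_{2}c_{2}'$, whereas the definition demands it for \emph{every} admissible constant; your factorisation --- given arbitrary $c'' > r \cdot r'$, pick $c \in \left(r, \frac{c''}{r'}\right)$ and set $c' = \frac{c''}{c}$, so that $c > r$, $c' > r'$ and $c \cdot c' = c''$ --- is exactly the missing step, and it is sound since $\frac{c''}{r'} > r$ is equivalent to $c'' > r \cdot r'$. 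Your alternative limit-based argument is a genuinely different and cleaner route: it trades the $\varepsilon$-style bookkeeping for the paper's own admission corollaries plus the product rule for finite limits, $l_{1} \in [0,r]$, $l_{2} \in [0,r']$ giving $l_{1} \cdot l_{2} \in [0, r \cdot r']$, and within the paper's framework (where the corollaries are stated as equivalences) this is a complete two-line proof. The direct route buys independence from the limit corollaries (it works straight from the set definition); the limit route buys brevity and makes the composite index $r \cdot r'$ conceptually transparent.
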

\begin{theorem}
    Transitivity in r-Complexity - Big r-Omega notation:

    $ f \in \Omega_{r}(g(n)), g \in \Omega_{r}(h(n)) \Rightarrow  f \in \Omega_{r \cdot r'}(h(n))$
\end{theorem}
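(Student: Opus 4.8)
The plan is to mirror the structure of the Big r-Theta transitivity proof, but working only with the lower-bound inequality that characterizes membership in the Big r-Omega class. Reading the hypotheses as $f \in \Omega_{r}(g(n))$ and $g \in \Omega_{r'}(h(n))$ (the index $r'$ being the evidently intended one, matching the $r \cdot r'$ in the conclusion), I would first expand both through the definition of $\Omega$: from $f \in \Omega_{r}(g(n))$, for every $c_{1} \in \mathbb{R}^{*}_{+}$ with $c_{1} < r$ there is an $n_{0}$ with $f(n) \geq c_{1} \cdot g(n)$ for all $n \geq n_{0}$; from $g \in \Omega_{r'}(h(n))$, for every $c_{1}' \in \mathbb{R}^{*}_{+}$ with $c_{1}' < r'$ there is an $n_{0}'$ with $g(n) \geq c_{1}' \cdot h(n)$ for all $n \geq n_{0}'$.

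The core step is to chain these two lower bounds. Fixing any admissible pair $c_{1} < r$ and $c_{1}' < r'$ and setting $n_{0}'' = \max(n_{0}, n_{0}')$, for all $n \geq n_{0}''$ both inequalities hold simultaneously; since $c_{1} > 0$ I may multiply the second inequality by $c_{1}$ and compose, obtaining $f(n) \geq c_{1} \cdot g(n) \geq c_{1} \cdot c_{1}' \cdot h(n)$. Note this multiplication is valid irrespective of the sign of $g(n)$ or $h(n)$, so no case analysis on signs is needed. Writing $c_{1}'' = c_{1} \cdot c_{1}'$, positivity of all the constants gives $c_{1}'' < r \cdot r'$, so this exhibits a lower bound of the required form against $h$ with an index below $r \cdot r'$.

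The one point that deserves care — and the main obstacle — is the quantifier direction: membership in $\Omega_{r \cdot r'}(h(n))$ requires the lower bound to hold for \emph{every} $c_{1}'' \in \mathbb{R}^{*}_{+}$ with $c_{1}'' < r \cdot r'$, whereas the chaining above only directly produces the particular products $c_{1} \cdot c_{1}'$. I would close this gap by showing that every target $c_{1}'' < r \cdot r'$ is realizable as such a product: given $c_{1}''$, choose $c_{1} = \sqrt{c_{1}'' \cdot r / r'}$ and $c_{1}' = \sqrt{c_{1}'' \cdot r' / r}$, so that $c_{1} \cdot c_{1}' = c_{1}''$ while $c_{1}/r = c_{1}'/r' = \sqrt{c_{1}''/(r \cdot r')} < 1$, guaranteeing $c_{1} < r$ and $c_{1}' < r'$. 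Feeding this pair into the chaining step then yields $f(n) \geq c_{1}'' \cdot h(n)$ for all $n \geq n_{0}''$, and since $c_{1}''$ was an arbitrary value below $r \cdot r'$, the defining condition of Big r-Omega is satisfied, establishing $f \in \Omega_{r \cdot r'}(h(n))$.
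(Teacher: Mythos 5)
Your proof is correct and follows essentially the route the paper intends: the paper gives no explicit proof for this theorem, stating only that it follows a similar structure to the Big r-Theta transitivity proof, and your argument is exactly that structure restricted to the lower-bound inequality --- expand both memberships, take $n_{0}'' = \max(n_{0}, n_{0}')$, and chain $f(n) \geq c_{1} \cdot g(n) \geq c_{1} \cdot c_{1}' \cdot h(n)$, which you correctly note needs no sign analysis since only positive constants multiply the inequalities. One remark worth making: your square-root construction realizing an arbitrary target $c_{1}'' < r \cdot r'$ as a product $c_{1} \cdot c_{1}'$ of admissible constants closes a quantifier gap that the paper's own Big r-Theta template leaves implicit --- the paper exhibits the bound only for constants of the product form and then asserts it for all $c_{1}'' < r \cdot r' < c_{2}''$ without justifying that every such constant is so realizable --- so your version is, if anything, more complete than the proof it mirrors. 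You also correctly read the hypothesis $g \in \Omega_{r}(h(n))$ as the evidently intended $g \in \Omega_{r'}(h(n))$, a typo in the paper's statement.
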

\begin{theorem}
    Transitivity in r-Complexity - Small r-O notation:

    $ f \in o_{r}(g(n)), g \in o_{r'}(h(n)) \Rightarrow  f \in o_{r \cdot r'}(h(n))$
\end{theorem}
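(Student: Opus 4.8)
The plan is to prove this statement through the limit criterion for Small \textit{r-}O established in the corresponding corollary, rather than unwinding the definition with explicit constants as was done for the Big \textit{r-}Theta transitivity. First I would invoke the Small \textit{r-}O admittance criterion on each hypothesis: from $f \in o_{r}(g(n))$ I obtain $\lim_{n\to\infty} f(n)/g(n) = 0$, and from $g \in o_{r'}(h(n))$ I obtain $\lim_{n\to\infty} g(n)/h(n) = 0$.

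Next I would decompose the target ratio multiplicatively as $\frac{f(n)}{h(n)} = \frac{f(n)}{g(n)} \cdot \frac{g(n)}{h(n)}$, valid for all sufficiently large $n$ at which $g(n) \neq 0$ and $h(n) \neq 0$. Since both factors converge to finite limits, namely $0$, the product rule for limits applies directly, yielding $\lim_{n\to\infty} f(n)/h(n) = 0 \cdot 0 = 0$. Applying the Small \textit{r-}O criterion in the reverse direction then gives $f \in o_{r \cdot r'}(h(n))$, which completes the argument.

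I expect the only genuine subtlety to lie in justifying the ratio decomposition, which presupposes that $g(n)$ is eventually nonzero; this is guaranteed implicitly, since the quotient $f(n)/g(n)$ already appears in the first hypothesis and is therefore well-defined for large $n$, and similarly for $g(n)/h(n)$. A further observation worth recording is that the Small \textit{r-}O criterion does not depend on the parameter $r$ at all: the limiting value is $0$ regardless of the subscript. Consequently the product $r \cdot r'$ in the conclusion is notational rather than essential, and any subscript would produce an equally valid statement. This stands in sharp contrast to the Big \textit{r-}Theta case proved earlier, where the subscripts genuinely multiply because the limiting ratio itself scales by the corresponding factor.
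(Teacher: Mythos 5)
Your proof is correct, but it takes a different route from the one the paper intends. The paper states this theorem without an explicit proof, saying only that "the remaining transitivity properties \ldots follow a similar structure to the one presented above," i.e.\ to the Big $r$-Theta transitivity proof, which unwinds the set definition: it combines the bounding constants multiplicatively ($c_1'' = c_1 \cdot c_1'$, $c_2'' = c_2 \cdot c_2'$) and takes $n_0'' = \max(n_0, n_0')$. The analogous definitional argument for Small $r$-O would, for an arbitrary $c > 0$, factor it as $c = c_1 \cdot c_2$ (e.g.\ $c_1 = c_2 = \sqrt{c}$), apply each hypothesis with its factor, and chain the inequalities on $n \geq \max(n_1, n_2)$. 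You instead invoke the limit-criterion corollary from Section~\ref{definitions} in both directions, reduce the claim to $\lim f/h = \lim (f/g) \cdot \lim (g/h) = 0 \cdot 0 = 0$, and you handle the one real technicality (eventual nonvanishing of $g$ and $h$) explicitly. What your route buys is brevity and a genuine insight the definitional route obscures: since the Small $r$-O criterion is independent of the subscript, the class $o_r(g)$ does not depend on $r$ at all, so the $r \cdot r'$ in the conclusion is purely notational --- a point the paper itself hints at when it says these classes coincide with the classical Bachmann--Landau sets, but never states in connection with this theorem. What the paper's route buys is uniformity (one proof template for all five transitivity theorems) and independence from the existence of the limits as such, since it works directly with the defining inequalities. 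Both are valid; yours is arguably the more illuminating for the Small $r$-O case specifically.
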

\begin{theorem}
    Transitivity in r-Complexity - Small r-Omega notation

    $ f \in \omega_{r}(g(n)), g \in \omega_{r'}(h(n)) \Rightarrow  f \in \omega_{r \cdot r'}(h(n))$
\end{theorem}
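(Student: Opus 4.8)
The plan is to reduce the statement to the limit criterion for the Small $r$-Omega class and then exploit the multiplicativity of ratios. First I would invoke the admittance criterion for Small $r$-Omega, which turns the hypothesis $f \in \omega_{r}(g(n))$ into $\lim_{n\to\infty} f(n)/g(n) = \infty$ and the hypothesis $g \in \omega_{r'}(h(n))$ into $\lim_{n\to\infty} g(n)/h(n) = \infty$. A key observation is that this criterion is entirely independent of the subscript, so the target index $r \cdot r'$ plays no genuine role here; it is retained only to keep the uniform notational pattern of the other transitivity theorems, and it suffices to verify that $\lim_{n\to\infty} f(n)/h(n) = \infty$.

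Next I would write the factorisation
\[ \frac{f(n)}{h(n)} = \frac{f(n)}{g(n)} \cdot \frac{g(n)}{h(n)}, \]
valid for all sufficiently large $n$, where the denominators are nonzero as guaranteed by the hypotheses. Since both factors diverge to $+\infty$, there is a threshold beyond which each exceeds any prescribed positive bound; given an arbitrary $M > 0$, choosing thresholds that make each factor exceed $\sqrt{M}$ forces the product to exceed $M$ past the maximum of the two thresholds. Hence $f(n)/h(n) \to \infty$, and applying the Small $r$-Omega criterion once more yields $f \in \omega_{r \cdot r'}(h(n))$.

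The hard part, such as it is, lies only in justifying the passage to the limit of the product: one must confirm that both ratios are eventually positive, which follows immediately from their divergence to $+\infty$, so that the elementary rule \emph{``the product of two sequences tending to $+\infty$ tends to $+\infty$''} applies without sign complications. No deeper obstacle arises, and in particular no interaction between the indices $r$ and $r'$ needs to be analysed, in contrast with the Big $r$-Theta case where the product $r \cdot r'$ genuinely governs the resulting bounding constants.
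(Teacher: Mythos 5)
Your proof is correct, but it takes a different route from the paper's. The paper gives no standalone argument for this theorem: it proves transitivity only for Big $r$-Theta, via the set definition (chaining the two-sided bounds $c_1 \cdot g(n) \leq f(n) \leq c_2 \cdot g(n)$ and $c_1' \cdot h(n) \leq g(n) \leq c_2' \cdot h(n)$, multiplying the constants so that $c_1 c_1' < r \cdot r' < c_2 c_2'$, and taking $n_0'' = \max(n_0, n_0')$), and then states that the remaining transitivity results, including this one, ``follow a similar structure.'' Carried out for Small $r$-Omega, that template would work with the definitional quantifier ``for every $c > 0$ there exists $n_0$ with $f(n) \geq c \cdot h(n)$,'' splitting a given $c$ into a product of two constants. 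You instead go through the admittance corollary ($f \in \omega_{r}(g(n)) \Leftrightarrow \lim_{n\to\infty} f(n)/g(n) = \infty$), factor $f/h = (f/g)\cdot(g/h)$, and invoke the product rule for sequences diverging to $+\infty$, with a clean $\sqrt{M}$ threshold argument. Both are valid within the paper's framework, since the corollaries are stated as equivalences. Your route is shorter and arguably more honest about the content of the statement: it makes explicit that the subscript on $\omega_r$ is vacuous (the criterion does not involve $r$ at all, as the paper itself notes when saying the small classes coincide with the classical ones), so the index $r \cdot r'$ in the conclusion is pure notational bookkeeping and the theorem reduces to classical small-omega transitivity. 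The paper's definitional template, by contrast, buys uniformity: one proof skeleton covers all five classes, including the big classes where the product of the indices genuinely determines the admissible bounding constants.
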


Symmetry is a fundamental algebraic property that plays a crucial role in various mathematical contexts, as it simplifies problem solving in many cases, and it also provides means to perform a more elegant flow of calculations.

\begin{theorem}
    Symmetry in r-Complexity:  $ f \in \Theta_{r}(g(n)) \Rightarrow g \in \Theta_{\frac{1}{r}}(f(n)) $
\end{theorem}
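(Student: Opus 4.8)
The plan is to reduce the statement to a single limit computation by invoking Corollary 1, which characterizes membership in $\Theta_r$ through the limit of the ratio of the two functions. Concretely, I would first observe that the hypothesis $f \in \Theta_r(g(n))$ is equivalent, by Corollary 1, to $\lim_{n\to\infty} \frac{f(n)}{g(n)} = r$. The goal $g \in \Theta_{1/r}(f(n))$ is, again by the same corollary, equivalent to $\lim_{n\to\infty} \frac{g(n)}{f(n)} = \frac{1}{r}$. Since $\frac{g(n)}{f(n)}$ is the reciprocal of $\frac{f(n)}{g(n)}$, and since $r$ is a strictly positive constant (so $r \neq 0$), the standard limit law for reciprocals gives $\lim_{n\to\infty} \frac{g(n)}{f(n)} = \frac{1}{\lim_{n\to\infty} f(n)/g(n)} = \frac{1}{r}$, which is exactly the required criterion.

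Alternatively, to match the definitional style used in the preceding proofs, I would argue directly from the definition of $\Theta_r$. Starting from the hypothesis, for every pair $c_1, c_2$ with $c_1 < r < c_2$ there is an $n_0$ with $c_1 \cdot g(n) \leq f(n) \leq c_2 \cdot g(n)$ for all $n \geq n_0$. Assuming the complexity functions are eventually positive (as is standard for resource-measuring functions), I would divide these inequalities by the positive quantities involved to isolate $g(n)$, obtaining $\frac{1}{c_2} \cdot f(n) \leq g(n) \leq \frac{1}{c_1} \cdot f(n)$. The key algebraic step is then the reindexing $c_1' = \frac{1}{c_2}$, $c_2' = \frac{1}{c_1}$: since $c_1 < r < c_2$ forces $\frac{1}{c_2} < \frac{1}{r} < \frac{1}{c_1}$, the new constants satisfy $c_1' < \frac{1}{r} < c_2'$, and they range over all admissible pairs for $\Theta_{1/r}$ precisely as $(c_1, c_2)$ ranges over the admissible pairs for $\Theta_r$. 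This yields $c_1' \cdot f(n) \leq g(n) \leq c_2' \cdot f(n)$ for all $n \geq n_0$, which is membership in $\Theta_{1/r}(f(n))$.

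The main obstacle I anticipate is handling positivity and the direction of the inequalities when dividing. The functions are declared with codomain $\mathbb{R}$ rather than $\mathbb{R}^{*}_{+}$, so strictly speaking one must either restrict attention to the eventually-positive regime or carefully track any sign changes that would reverse the inequalities. For this reason I would present the reciprocal-of-limit argument as the primary route: it reduces the entire claim to the elementary fact that $r \neq 0$ guarantees a finite nonzero reciprocal, and it sidesteps any case analysis on the signs of $f$ and $g$, since it only manipulates the ratio itself.
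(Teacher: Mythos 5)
Your proposal is correct, and your preferred route is genuinely different from the paper's. The paper proves the theorem definitionally --- essentially your fallback argument: starting from $c_1 \cdot g(n) \le f(n) \le c_2 \cdot g(n)$, it first normalizes the constants by $r$ (setting $c_1' = c_1/r$, $c_2' = c_2/r$ so that $c_1' < 1 < c_2'$), then inverts the two inequalities to isolate $g(n)$, and finally reindexes $c_1'' = 1/c_2'$, $c_2'' = 1/c_1'$ to conclude $g \in \Theta_{1/r}(f(n))$. Your primary route instead invokes Corollary 1 twice together with the reciprocal limit law, reducing the whole theorem to $\lim_{n\to\infty} g(n)/f(n) = 1/\bigl(\lim_{n\to\infty} f(n)/g(n)\bigr) = 1/r$, which is legitimate since $r \neq 0$. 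What each buys: the limit argument is shorter and, as you rightly note, sidesteps the positivity issue that any definitional argument must confront when dividing by $g(n)$ and by the constants --- the paper in fact performs that division without comment, implicitly assuming the functions are eventually positive. Conversely, the definitional route is self-contained: it does not lean on Corollary 1, which the paper states without proof and which presupposes the ratio $f(n)/g(n)$ is eventually well defined and convergent. A further point in your favor: your fallback version handles the universal quantifier over constant pairs explicitly, by noting that $(c_1, c_2) \mapsto (1/c_2, 1/c_1)$ is a bijection between pairs straddling $r$ and pairs straddling $1/r$; the paper's reindexing is meant to do the same but contains slips (it writes $c_1'' < r < c_2''$ where $c_1'' < 1 < c_2''$ is intended, and $c_2'$ where $c_2''$ is meant in the final display), so your rendering of that argument is the more careful of the two.
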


\begin{proof}
    $ f \in \Theta_{r}(g(n)) \Rightarrow \forall c_{1}, c_{2} \in \mathbb{R}^{*}_{+} \ s.t. c_{1}< r < c_{2} , \exists n_{0} \in \mathbb{N}^{*}\ \\ s.t.\ \ c_{1} \cdot g(n) \leq f(n) \leq c_{2} \cdot g(n)\ ,\  \forall n \geq n_{0} $
    \\ Using the substitution $c_{1}' = \dfrac{c_{1}}{r}, c_{2}' = \dfrac{c_{2}}{r} \Rightarrow \forall c_{1}', c_{2}' \in \mathbb{R}^{*}_{+} \ s.t. c_{1} < 1 < c_{2} , \exists n_{0} \in \mathbb{N}^{*}\ \\ s.t.\ \ c_{1}' \cdot g(n) \leq \dfrac{1}{r} \cdot f(n) \leq c_{2}' \cdot g(n)\ ,\  \forall n \geq n_{0} $
    \\ The previous inequality can be re-written as:
    \[\begin{cases}
          g(n) \leq \dfrac{1}{c_{1}'} \cdot \dfrac{1}{r} \cdot f(n) \\ g(n) \geq \dfrac{1}{c_{2}'} \cdot \dfrac{1}{r} \cdot f(n)
    \end{cases}\]
    \\ Using notation $c_{1}'' = \dfrac{1}{c_{2}'}, c_{2}'' = \dfrac{1}{c_{1}'}$ the inequality becomes:  \\
    $\forall c_{1},'' c_{2}'' \in \mathbb{R}^{*}_{+} \ s.t. c_{1}'' < r < c_{2}'' , \exists n_{0} \in \mathbb{N}^{*}\ $
    \[ {c_{1}''} \cdot \dfrac{1}{r} \cdot f(n) \leq g(n) \leq {c_{2}'} \cdot \dfrac{1}{r} \cdot f(n)\ \ \forall n \geq n_{0} \]
    Thus, Based on the definition of $ \Theta_{r}(f(n))$, $ f \in \Theta_{r}(g(n)) \Rightarrow g \in \Theta_{\frac{1}{r}}(f(n)) $.
\end{proof}

\begin{theorem}
    Transpose symmetry in r-Complexity:  $ f \in \mathcal{O}_{r}(g(n)) \Leftrightarrow g \in \Omega_{\frac{1}{r}}(f(n)) $
\end{theorem}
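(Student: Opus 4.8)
The plan is to work directly from the defining conditions of the two classes rather than through the limit corollaries, since a single bijection between the admissible constant-ranges will deliver both implications at once and will sidestep any delicate behaviour at the boundary of the intervals. Recall that $f \in \mathcal{O}_{r}(g(n))$ unfolds to: for every $c \in \mathbb{R}^{*}_{+}$ with $r < c$ there is an $n_{0}$ such that $f(n) \leq c \cdot g(n)$ for all $n \geq n_{0}$; and $g \in \Omega_{1/r}(f(n))$ unfolds to: for every $c' \in \mathbb{R}^{*}_{+}$ with $c' < 1/r$ there is an $n_{0}'$ such that $g(n) \geq c' \cdot f(n)$ for all $n \geq n_{0}'$.

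The key observation is that $c \mapsto 1/c$ is a decreasing bijection from $\{c : c > r\}$ onto $\{c' : 0 < c' < 1/r\}$. First I would prove the forward implication: assuming $f \in \mathcal{O}_{r}(g(n))$, I fix an arbitrary $c'$ with $0 < c' < 1/r$ and set $c = 1/c'$, so that $c > r$. Applying the hypothesis to this particular $c$ yields an $n_{0}$ with $f(n) \leq c \cdot g(n)$ for all $n \geq n_{0}$. Dividing both sides by the positive constant $c$ --- which preserves the inequality and, crucially, requires no assumption on the sign of $f$ or $g$ --- gives $g(n) \geq (1/c)\, f(n) = c' \cdot f(n)$ for all $n \geq n_{0}$. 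Taking $n_{0}' = n_{0}$ produces exactly the defining condition of $\Omega_{1/r}(f(n))$.

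The converse is the mirror image: assuming $g \in \Omega_{1/r}(f(n))$, I fix $c > r$, set $c' = 1/c < 1/r$, apply the hypothesis to obtain $g(n) \geq c' \cdot f(n)$ eventually, and multiply by the positive constant $c$ to recover $f(n) \leq c \cdot g(n)$ on the same tail. Since every step is reversible and the reciprocal map is a bijection between the two constant-ranges, the two membership conditions are equivalent, which is the claimed $\Leftrightarrow$.

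The only point that needs care --- and the main obstacle to a limit-based argument --- is the interface between the strict inequalities $r < c$ and $c' < 1/r$ and the boundary value $r$ itself. Were one instead to argue through the corollaries, writing $\lim_{n\to\infty} g(n)/f(n) = 1/\left(\lim_{n\to\infty} f(n)/g(n)\right)$, the degenerate case $\lim_{n\to\infty} f(n)/g(n) = 0$ (equivalently $f \in o_{r}(g(n))$) would force the reciprocal ratio to $+\infty$, which does not lie in the finite interval $[1/r,\infty)$ demanded by the Big \textit{r-}Omega corollary; the direct argument above avoids this entirely because the universally quantified strict constants never attain the boundary. I would therefore present the definition-based proof as the clean route, and mention the limit reformulation, if at all, only as an informal consistency check.
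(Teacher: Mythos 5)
Your proof is correct and takes essentially the same route as the paper's: unfold the definitions of Big \textit{r-}O and Big \textit{r-}Omega, divide (resp.\ multiply) the inequality by the positive constant, and use the reciprocal substitution $c' = \dfrac{1}{c}$, which maps the range $\lbrace c : c > r \rbrace$ bijectively onto $\lbrace c' : 0 < c' < \dfrac{1}{r} \rbrace$. Your write-up is in fact tighter than the paper's, which spells out only the forward implication (with some $f$/$g$ swaps in its intermediate displays) and leaves the converse implicit, whereas you prove both directions and correctly note why a limit-based argument would fail in the degenerate case $\lim_{n\to\infty} f(n)/g(n) = 0$.
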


\begin{proof}
    Using the definition of $ \mathcal{O}_{r}(f(n))$ and $f \in \mathcal{O}_{r}(g(n))$ ($\mathcal{O}_{r}(g(n)) \supseteq \lbrace f \rbrace \ $):
    \[ \ \forall c  \in \mathbb{R}^{*}_{+} \ s.t.\  r<c, \exists n_{0} \in \mathbb{N}^{*}\ s.t.\  g(n) \leq c \cdot f(n),\  \forall n \geq n_{0} \]
    
    We can rewrite the previous relation as following:
    \[ \ \forall c  \in \mathbb{R}^{*}_{+} \ s.t.\  r<c, \exists n_{0} \in \mathbb{N}^{*}\ s.t.\  \dfrac{1}{c} \cdot g(n) \leq f(n),\  \forall n \geq n_{0} \]
    Substituting the constant $ c' = \dfrac{1}{c}$:
    \[ \ \forall c'  \in \mathbb{R}^{*}_{+} \ s.t.\  c' < \dfrac{1}{r}, \exists n_{0} \in \mathbb{N}^{*}\ s.t.\  f(n) \geq c' \cdot g(n) ,\  \forall n \geq n_{0} \]

    Therefore:
    \[\Omega_{\frac{1}{r}}(f(n)) \supseteq \lbrace g \rbrace \]
\end{proof}
\begin{remark}
    The Transpose symmetry hold for Small \textit{r-}o and Small \textit{r-}Omega, as these two sets are equal with the classical sets defined in Bachmann–Landau notations.
\end{remark}

\begin{theorem}
    Projection property in r-Complexity:  \\  $ f \in \Theta_{r}(g(n)) \Leftrightarrow f \in \mathcal{O}_{r}(g(n))$ and $f \in \Omega_{r}(g(n)) $
\end{theorem}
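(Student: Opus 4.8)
The plan is to collapse the biconditional onto the single quantity $L = \lim_{n\to\infty} f(n)/g(n)$ by invoking the three limit-based corollaries already established for $\Theta_{r}$, $\mathcal{O}_{r}$, and $\Omega_{r}$. Those corollaries make membership in $\Theta_{r}(g(n))$ equivalent to $L = r$, membership in $\mathcal{O}_{r}(g(n))$ equivalent to $L \in [0, r]$, and membership in $\Omega_{r}(g(n))$ equivalent to $L \in [r, \infty)$. This turns the entire statement into an elementary fact about intervals on the real line.

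For the forward implication I would assume $f \in \Theta_{r}(g(n))$, so $L = r$. Since $r$ is a positive real it lies in both closed half-lines sharing $r$ as endpoint, i.e. $r \in [0, r]$ and $r \in [r, \infty)$; the $\mathcal{O}_{r}$ and $\Omega_{r}$ corollaries then read off $f \in \mathcal{O}_{r}(g(n))$ and $f \in \Omega_{r}(g(n))$ at once.

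For the reverse implication I would assume both $f \in \mathcal{O}_{r}(g(n))$ and $f \in \Omega_{r}(g(n))$. The corollaries supply $L \in [0, r]$ and $L \in [r, \infty)$ for the same limit $L$, hence $L \in [0, r] \cap [r, \infty) = \lbrace r \rbrace$, forcing $L = r$; applying the $\Theta_{r}$ corollary finishes the argument.

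The only delicate point I foresee is hygienic rather than conceptual: the corollaries I lean on are phrased so that membership already carries the existence of $\lim_{n\to\infty} f(n)/g(n)$, and I want to be sure I am entitled to intersect the two intervals for one common limit before concluding $L = r$. Should I wish to sidestep any worry about limit existence entirely, I would argue straight from the defining inequalities instead: in the reverse direction, for arbitrary $c_{1} < r < c_{2}$, feed $c_{2}$ into the $\mathcal{O}_{r}$ definition to get $f(n) \leq c_{2} \cdot g(n)$ beyond some $n_{0}'$, feed $c_{1}$ into the $\Omega_{r}$ definition to get $f(n) \geq c_{1} \cdot g(n)$ beyond some $n_{0}''$, and take $n_{0} = \max(n_{0}', n_{0}'')$ to assemble the two-sided bound $c_{1} \cdot g(n) \leq f(n) \leq c_{2} \cdot g(n)$ that is precisely the $\Theta_{r}$ condition; the forward direction then just reads the upper and lower halves of the $\Theta_{r}$ bound separately against the $\mathcal{O}_{r}$ and $\Omega_{r}$ definitions.
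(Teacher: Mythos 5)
Your proposal is correct, but your primary route is genuinely different from the paper's. The paper argues directly from the inequality-based definitions: for the forward direction it splits the two-sided bound $c_{1} \cdot g(n) \leq f(n) \leq c_{2} \cdot g(n)$ into its lower and upper halves, which are precisely the $\Omega_{r}$ and $\mathcal{O}_{r}$ conditions; for the reverse direction it takes the two one-sided bounds, valid beyond thresholds $n_{0}$ and $n_{0}'$, and merges them beyond $n_{0}'' = \max(n_{0}, n_{0}')$. Your main argument instead passes through the limit criteria and reduces the whole statement to $[0,r] \cap [r,\infty) = \lbrace r \rbrace$, which is shorter and makes the geometric content transparent. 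What it costs is exactly the hygiene issue you flagged: the corollaries' forward directions silently assume that $\lim_{n\to\infty} f(n)/g(n)$ exists, which the raw definitions do not guarantee --- a ratio $f(n)/g(n)$ oscillating inside $[0, r/2]$ satisfies the $\mathcal{O}_{r}$ inequalities for every $c > r$ without converging, so membership in $\mathcal{O}_{r}$ or $\Omega_{r}$ alone does not hand you a single limit $L$ over which to intersect intervals (your forward direction is unaffected, since there you start from $\Theta_{r}$ membership and only use the corollaries' reverse directions, which are sound). Your fallback argument --- feeding $c_{2}$ into the $\mathcal{O}_{r}$ definition, $c_{1}$ into the $\Omega_{r}$ definition, and taking the maximum of the two thresholds --- is essentially identical to the paper's proof and is self-contained; it is the version you should promote from a contingency remark to the main argument.
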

\begin{proof}
    The proof is straightforward, using the definitions of Big \textit{r-}Theta, Big \textit{r-}O and Big \textit{r-}Omega. \\
    " $\Rightarrow$ " $ f \in \Theta_{r}(g(n)) \Rightarrow f \in \mathcal{O}_{r}(g(n)), f \in \Omega_{r}(g(n)) $
    Based on the definition of $\Theta_{r}(g(n))$, $ f \in \Theta_{r}(g(n)) \Rightarrow$
    \[\forall c_{1}, c_{2} \in \mathbb{R}^{*}_{+} \ s.t. c_{1} < r < c_{2} , \exists n_{0} \in \mathbb{N}^{*}\ \\ s.t.\ \ c_{1} \cdot g(n) \leq f(n) \leq c_{2} \cdot g(n)\ ,\  \forall n \geq n_{0} \]
    By splitting the inequality:
    \[\begin{cases}
          \forall c_{1} \in \mathbb{R}^{*}_{+} \ s.t. c_{1} < r , \exists n_{0} \in \mathbb{N}^{*}\ s.t.\ \ c_{1} \cdot g(n) \leq f(n) ,\  \forall n \geq n_{0} \\ \forall  c_{2} \in \mathbb{R}^{*}_{+} \ s.t. r < c_{2} , \exists n_{0} \in \mathbb{N}^{*}\ s.t.\ f(n) \leq c_{2} \cdot g(n)\ ,\  \forall n \geq n_{0}
    \end{cases}\]
    Therefore:
    \[\begin{cases}
          f \in \mathcal{O}_{r}(g(n)) \\ f \in \Omega_{r}(g(n))
    \end{cases}\]

    " $\Leftarrow$ " $ f \in \mathcal{O}_{r}(g(n)), f \in \Omega_{r}(g(n)) \Rightarrow f \in \Theta_{r}(g(n)) $
    Using the definitions of Big \textit{r-}O and Big \textit{r-}Omega:
    \[\begin{cases}
          \forall c_{1} \in \mathbb{R}^{*}_{+} \ s.t. c_{1} < r , \exists n_{0} \in \mathbb{N}^{*}\ s.t.\ \ c_{1} \cdot g(n) \leq f(n) ,\  \forall n \geq n_{0} \\ \forall  c_{2} \in \mathbb{R}^{*}_{+} \ s.t. r < c_{2} , \exists n_{0}' \in \mathbb{N}^{*}\ s.t.\ f(n) \leq c_{2} \cdot g(n)\ ,\  \forall n \geq n_{0}'
    \end{cases}\]
    By choosing $n_{0}'' = max(n_{0}, n_{0}')$:
    \[\begin{cases}
          \forall c_{1} \in \mathbb{R}^{*}_{+} \ s.t. c_{1} < r ,\Rightarrow\ \ c_{1} \cdot g(n) \leq f(n) ,\  \forall n \geq n_{0}'' \\ \forall  c_{2} \in \mathbb{R}^{*}_{+} \ s.t. r < c_{2} , \Rightarrow \ f(n) \leq c_{2} \cdot g(n)\ ,\  \forall n \geq n_{0}''
    \end{cases}\]
    By merging the inequalities, we have for $\forall n \geq n_{0}''$:
    \[\forall c_{1}, c_{2} \in \mathbb{R}^{*}_{+} \ s.t. c_{1} < r < c_{2} , \exists n_{0}'' = max(n_{0}, n_{0}')\  s.t.\ \ c_{1} \cdot g(n) \leq f(n) \leq c_{2} \cdot g(n)\ \]
    Therefore $f \in \Theta_{r}(g(n))$.
\end{proof}

\section{Addition properties}
\label{addition}

Analogous to the calculus in Bachmann-Landau notations (including Big-O arithmetic), valuable insights can be gained by examining the behaviour of r-Complexity classes under addition. For each of the presented theorem, we present a corollary, using a \textit{relax notation}, where consider that by any r-Complexity class notation, we denote an arbitrary function part of the set. This would allow us to write equations such as the one presented below.

For analysing the addition properties in Big r-Theta, the following relations hold for any correctly defined functions $f, g, f', g', h:\mathbb{N}\longrightarrow\mathbb{R}$, where \[ h(n) = f'(n) + g'(n)\ \] $ \forall n \in  \mathbb{N} $, where $f',g'$ are two arbitrary functions such that $ f' \in \Theta_{r}(f), g' \in \Theta_{q}(g) $ and $r,q \in \mathbb{R}_{+}$:
    \begin{theorem}
        If $ \lim_{n\to\infty} \dfrac{f(n)}{g(n)} = 0 \Rightarrow  h \in \Theta_{q}(g) $. \\
    \end{theorem}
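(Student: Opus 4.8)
The plan is to reduce everything to the limit criterion for Big $r$-Theta membership established in the corollary to the Big $r$-Theta definition, namely that $h \in \Theta_q(g)$ if and only if $\lim_{n\to\infty} h(n)/g(n) = q$. Working with this ratio criterion turns the whole task into the computation of a single limit, which is far cleaner than manipulating the two-sided $c_1, c_2$ bounds directly.

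First I would translate each hypothesis into a limit statement via the corresponding corollary: $f' \in \Theta_r(f)$ gives $\lim_{n\to\infty} f'(n)/f(n) = r$, and $g' \in \Theta_q(g)$ gives $\lim_{n\to\infty} g'(n)/g(n) = q$, while the third hypothesis is already $\lim_{n\to\infty} f(n)/g(n) = 0$. Then, since $h = f' + g'$, I would split the target ratio as
\[ \frac{h(n)}{g(n)} = \frac{f'(n)}{g(n)} + \frac{g'(n)}{g(n)}. \]
The second summand tends to $q$ by hypothesis. For the first summand I would factor $\frac{f'(n)}{g(n)} = \frac{f'(n)}{f(n)} \cdot \frac{f(n)}{g(n)}$, whose two factors tend to $r$ and $0$ respectively, so the product tends to $r \cdot 0 = 0$. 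Adding the two limits by the algebra of limits yields $\lim_{n\to\infty} h(n)/g(n) = q$, and the corollary then delivers $h \in \Theta_q(g)$, which is exactly the claim.

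The step I expect to be the main obstacle is the legitimacy of the factorization $f'(n)/g(n) = (f'(n)/f(n)) \cdot (f(n)/g(n))$, which silently assumes $f(n) \neq 0$ for all large $n$. To make the argument airtight I would instead bound the first summand directly: since $f'(n)/f(n) \to r$, for all sufficiently large $n$ one has $|f'(n)| \leq (r+1)\,|f(n)|$, hence $|f'(n)/g(n)| \leq (r+1)\,|f(n)/g(n)| \to 0$, so the squeeze theorem forces $f'(n)/g(n) \to 0$. This avoids any division by a vanishing $f$ and remains valid even in the degenerate case $r = 0$ permitted by the hypothesis $r \in \mathbb{R}_+$. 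The remaining manipulations are routine, so the heart of the proof is really just this careful control of the $f'/g$ term.
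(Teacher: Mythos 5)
Your proof is correct, and it follows the same limit-based strategy as the paper (translate all hypotheses into ratio limits via the corollaries, then compute), but with a genuinely different decomposition. The paper normalizes by $g'$: it shows $\lim_{n\to\infty} f'(n)/g'(n) = 0$, deduces $\lim_{n\to\infty} h(n)/g'(n) = 1$, concludes $h \in \Theta_{1}(g')$, and then needs a second step --- converting $\Theta_{1}(g')$ into $\Theta_{q}(g)$ using the fact that $g' \in \Theta_{q}(g)$ (the paper calls this ``reflexivity,'' though it is really an application of its transitivity theorem with parameters $1$ and $q$). You instead normalize by $g$ from the outset, splitting $h/g = f'/g + g'/g$, so the limit $q$ emerges in a single computation and the membership criterion is invoked exactly once; this removes the conversion step entirely and makes the proof self-contained. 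Your closing refinement --- bounding $|f'(n)| \leq (r+1)\,|f(n)|$ and squeezing, to avoid dividing by a possibly vanishing $f(n)$ --- addresses a gap the paper's proof also silently has (its factorization through $f'/g'$ carries the same issue). One small caveat: you derive that bound from the limit $f'(n)/f(n) \to r$, which itself presupposes $f(n) \neq 0$ for large $n$; to make the repair fully airtight you should instead extract the bound from the definitional inequalities $c_{1} \cdot f(n) \leq f'(n) \leq c_{2} \cdot f(n)$ (valid for all $n \geq n_{0}$ with $c_{2} = r + 1$), which require no division at all.
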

    \begin{proof}
        $ \lim_{n\to\infty} \dfrac{f(n)}{g(n)} = 0 \Rightarrow \lim_{n\to\infty} \dfrac{f'(n)}{g'(n)} = 0 \Rightarrow \lim_{n\to\infty} \dfrac{g'(n) + f'(n)}{g'(n)} = 1 $ and using the result from asymptotic analysis section, we have $ g'(n) + f'(n) = h(n) \in \Theta_{1}(g')$. \\ Using reflexivity property, $ h(n) \in \Theta_{q}(g)$.
    \end{proof}
    \begin{corollary} If $\lim_{n\to\infty} \dfrac{f(n)}{g(n)} = 0$, the following relation holds:
        \[  \lim_{n\to\infty} \dfrac{f(n)}{g(n)} = 0 \Rightarrow \Theta_{r}(f) + \Theta_{q}(g) = \Theta_{q}(g)\]
    \end{corollary}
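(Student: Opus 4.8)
The plan is to read the relaxed notation as an equality of function sets, interpreting $\Theta_{r}(f) + \Theta_{q}(g)$ as the Minkowski sum $\lbrace f' + g' \mid f' \in \Theta_{r}(f),\ g' \in \Theta_{q}(g) \rbrace$, and then to prove the corollary by establishing the two set inclusions separately. The forward inclusion will be essentially free, while the reverse inclusion will require producing an explicit decomposition.

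For the inclusion $\Theta_{r}(f) + \Theta_{q}(g) \subseteq \Theta_{q}(g)$, I would simply invoke the preceding theorem. Any element of the left-hand side has the form $h = f' + g'$ with $f' \in \Theta_{r}(f)$ and $g' \in \Theta_{q}(g)$, and under the standing hypothesis $\lim_{n\to\infty} f(n)/g(n) = 0$ that theorem asserts precisely $h \in \Theta_{q}(g)$. No further work is needed here.

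For the reverse inclusion $\Theta_{q}(g) \subseteq \Theta_{r}(f) + \Theta_{q}(g)$, I would take an arbitrary $w \in \Theta_{q}(g)$ and construct a witness decomposition. The natural candidate in $\Theta_{r}(f)$ is $f' = r \cdot f$, since the limit criterion for Big \textit{r-}Theta gives $\lim_{n\to\infty} f'(n)/f(n) = r$, hence $f' \in \Theta_{r}(f)$. Setting $g' = w - f'$, it then suffices to check $g' \in \Theta_{q}(g)$, which by the same criterion reduces to evaluating $\lim_{n\to\infty} g'(n)/g(n)$. Writing this quotient as $w/g - f'/g$ and using $\lim_{n\to\infty} w(n)/g(n) = q$ together with $\lim_{n\to\infty} f'(n)/g(n) = \lim_{n\to\infty} \bigl( f'(n)/f(n) \bigr)\bigl( f(n)/g(n) \bigr) = r \cdot 0 = 0$, I would obtain $\lim_{n\to\infty} g'(n)/g(n) = q$, so $g' \in \Theta_{q}(g)$ and therefore $w = f' + g'$ lies in the Minkowski sum.

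I expect the only real obstacle to be bookkeeping rather than conceptual depth: one must ensure the limit criteria are applicable, namely that $f$ and $g$ are eventually nonzero so that all the relevant quotients are well defined, and that the algebra-of-limits manipulations are legitimate (the product rule applied to $(f'/f)(f/g)$ and the difference rule applied to $w/g - f'/g$). Granting the standing assumptions under which the admittance corollaries are stated, the reverse inclusion is then a routine verification, and combining both inclusions yields the claimed equality $\Theta_{r}(f) + \Theta_{q}(g) = \Theta_{q}(g)$.
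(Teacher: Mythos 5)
Your proof is correct, but it does more than the paper does, because you and the paper read the corollary differently. In the paper the ``$=$'' sign is not a set equality: the corollary is stated in the \emph{relaxed notation} introduced at the start of the addition section, where each class symbol denotes an arbitrary member of that class, so the corollary is literally a restatement of the preceding theorem (for arbitrary $f' \in \Theta_{r}(f)$ and $g' \in \Theta_{q}(g)$, the sum $h = f' + g'$ lies in $\Theta_{q}(g)$) and receives no separate proof. You instead interpret $\Theta_{r}(f) + \Theta_{q}(g)$ as a Minkowski sum and prove a genuine set equality: your forward inclusion is exactly the paper's theorem, while your reverse inclusion $\Theta_{q}(g) \subseteq \Theta_{r}(f) + \Theta_{q}(g)$ is new content, established by the clean decomposition $w = r \cdot f + (w - r \cdot f)$ together with the limit criteria, namely $\lim_{n\to\infty} (r f)/f = r$ and $\lim_{n\to\infty} (w - r f)/g = q - r \cdot 0 = q$. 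This buys a stronger and more precise statement --- it shows the relaxed-notation equation is not merely one-directional --- at the cost of the bookkeeping assumptions you correctly flag (that $f$ and $g$ are eventually nonzero so the quotients and limit criteria make sense); those same assumptions are already implicit in the paper's own use of the admittance corollaries, so nothing is lost relative to the paper's framework.
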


    \begin{theorem}
        If $ \lim_{n\to\infty} \dfrac{f(n)}{g(n)} = \infty \Rightarrow  h \in \Theta_{r}(f) $. \\
    \end{theorem}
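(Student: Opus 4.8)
The plan is to mirror the argument of the preceding theorem, exchanging the roles of $f$ and $g$ (and correspondingly of $r$ and $q$), since the hypothesis $\lim_{n\to\infty} f(n)/g(n) = \infty$ is the exact counterpart of the $0$-limit case, now with $f$ (hence $f'$) playing the dominant role. Throughout I would use the limit criterion for Big \textit{r-}Theta (the Corollary in Section~\ref{definitions}), translating membership statements into statements about quotient limits, which makes the whole computation a chain of elementary limit manipulations.

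First I would restate the data as $\lim_{n\to\infty} f'(n)/f(n) = r$ and $\lim_{n\to\infty} g'(n)/g(n) = q$ with $r,q>0$. Writing $\frac{f'(n)}{g'(n)} = \frac{f'(n)/f(n)}{g'(n)/g(n)} \cdot \frac{f(n)}{g(n)}$, the first factor tends to the finite positive value $r/q$ while the second tends to $\infty$, so $\lim_{n\to\infty} f'(n)/g'(n) = \infty$, equivalently $\lim_{n\to\infty} g'(n)/f'(n) = 0$. Next I would compare $h$ with the dominant term $f'$: since $\frac{h(n)}{f'(n)} = \frac{f'(n)+g'(n)}{f'(n)} = 1 + \frac{g'(n)}{f'(n)}$, the previous step gives $\lim_{n\to\infty} h(n)/f'(n) = 1$, so by the limit criterion $h \in \Theta_{1}(f')$. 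Finally, combining $h \in \Theta_{1}(f')$ with $f' \in \Theta_{r}(f)$ via the transitivity property for Big \textit{r-}Theta (product constant $1 \cdot r = r$) yields $h \in \Theta_{r}(f)$; this is the step the preceding proof phrases in terms of the reflexivity scaling relating $f'$ back to $f$.

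The step I expect to be the main obstacle is purely the rigorous justification of the $\infty$-limit in the first stage: the factorisation of $f'/g'$ is only legitimate once the involved quotients are known to be eventually well defined and of constant sign, which follows from $r,q>0$ together with eventual positivity of the functions, but should be stated explicitly rather than taken for granted so that the passage to the limit $\infty$ is airtight. Once that bookkeeping is in place, the remaining computations are routine and entirely parallel to the already-established $0$-limit theorem.
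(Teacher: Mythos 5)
Your proposal is correct and is essentially the paper's own approach: the paper proves this by swapping $f \leftrightarrow g$ (hence $r \leftrightarrow q$) and citing the preceding $0$-limit theorem together with commutativity of addition, which is exactly the role-exchange you carry out, just unpacked into the explicit chain $\lim g'/f' = 0$, $h/f' \to 1$, $h \in \Theta_{1}(f')$, and then the scaling back to $\Theta_{r}(f)$. Your explicit use of transitivity ($1 \cdot r = r$) in the last step is in fact a cleaner justification than the paper's appeal to ``reflexivity'' at the corresponding point of the template proof.
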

    \begin{proof}
        Using the last result, by performing the swap:$f \leftarrow g, g \leftarrow f$ and considering the commutativity of addition, we obtain the proof for this statement.
    \end{proof}

    \begin{corollary} If $\lim_{n\to\infty} \dfrac{f(n)}{g(n)} = \infty$, the following relation holds:
        \[  \lim_{n\to\infty} \dfrac{f(n)}{g(n)} = \infty \Rightarrow \Theta_{r}(f) + \Theta_{q}(g) = \Theta_{r}(f)\]
    \end{corollary}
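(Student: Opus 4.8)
The plan is to recognize that this corollary is simply the \emph{relax notation} restatement of the theorem immediately preceding it (the $\infty$ case), and that it relates to the first corollary (the $0$ case) by exactly the same role-swap that derived the $\infty$-theorem from the $0$-theorem. First I would fix the meaning of the equation: under the relax notation, $\Theta_{r}(f) + \Theta_{q}(g) = \Theta_{r}(f)$ asserts that the collection of all sums $f' + g'$ with $f' \in \Theta_{r}(f)$ and $g' \in \Theta_{q}(g)$ coincides with $\Theta_{r}(f)$.

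The quickest route is reduction to the first corollary by symmetry. Applying that corollary with the substitution $f \leftrightarrow g$ and $r \leftrightarrow q$, and noting that $\lim_{n\to\infty} f(n)/g(n) = \infty$ is equivalent to $\lim_{n\to\infty} g(n)/f(n) = 0$, yields $\Theta_{q}(g) + \Theta_{r}(f) = \Theta_{r}(f)$. Since addition of functions is commutative, this is identical to $\Theta_{r}(f) + \Theta_{q}(g) = \Theta_{r}(f)$, which is the claim. This mirrors the proof of the $\infty$-theorem, which was obtained from its $0$-counterpart by the same swap together with commutativity.

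To make the set equality genuinely two-sided I would argue both inclusions explicitly. The inclusion $\Theta_{r}(f) + \Theta_{q}(g) \subseteq \Theta_{r}(f)$ is precisely the statement of the preceding theorem: any $h = f' + g'$ lies in $\Theta_{r}(f)$. For the reverse inclusion, given an arbitrary $\phi \in \Theta_{r}(f)$ I would exhibit a decomposition $\phi = f' + g'$ of the required type. Fixing a representative $g' = q \cdot g \in \Theta_{q}(g)$ and setting $f' = \phi - g'$, the limit criterion for Big $r$-Theta gives
\[ \lim_{n\to\infty} \dfrac{f'(n)}{f(n)} = \lim_{n\to\infty} \dfrac{\phi(n)}{f(n)} - q \cdot \lim_{n\to\infty} \dfrac{g(n)}{f(n)} = r - q \cdot 0 = r, \]
using $\phi \in \Theta_{r}(f)$ and $g(n)/f(n) \to 0$; hence $f' \in \Theta_{r}(f)$, and $\phi = f' + g'$ is the desired representation.

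The main obstacle is interpretive rather than computational: one must check that the limit arithmetic used in the reverse inclusion is legitimate, i.e. that the individual limits of $\phi/f$ and $g/f$ exist before splitting the limit of the difference. Both existences are guaranteed by the Big $r$-Theta criterion together with the hypothesis $f/g \to \infty$. Beyond this point every step is routine, so the forward inclusion is immediate from the theorem and the reverse inclusion reduces to the single limit computation above.
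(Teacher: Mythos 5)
Your proposal is correct, and its core coincides with the paper's treatment: in the paper this corollary is not given a separate proof at all --- under the \emph{relax notation} it is simply the preceding theorem restated, and that theorem is itself derived from the $0$-case by exactly the swap $f \leftrightarrow g$, $r \leftrightarrow q$ together with commutativity of addition, which is your first reduction. Where you genuinely go beyond the paper is in reading the ``$=$'' as a literal equality of sets and supplying the reverse inclusion $\Theta_{r}(f) \subseteq \Theta_{r}(f) + \Theta_{q}(g)$ via the decomposition $\phi = \left( \phi - q \cdot g \right) + q \cdot g$, checking $\phi - q \cdot g \in \Theta_{r}(f)$ and $q \cdot g \in \Theta_{q}(g)$ with the limit criterion. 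The paper's relax notation makes this step unnecessary (there the equation only asserts that an arbitrary sum from the left-hand side belongs to the right-hand side), so your argument buys a strictly stronger statement: the equation holds as a genuine set identity, not just as a one-sided membership claim. Your limit manipulation is legitimate --- both $\lim_{n\to\infty} \phi(n)/f(n) = r$ (from the Big $r$-Theta criterion) and $\lim_{n\to\infty} g(n)/f(n) = 0$ (from the hypothesis) exist and are finite, so the difference splits --- and this was indeed the only delicate point; the one implicit dependency worth noting is that, like the paper, you rely on the criterion corollary being an equivalence, which is how the paper states it.
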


    \begin{theorem}
        If $ \lim_{n\to\infty} \dfrac{f(n)}{g(n)} = t, \ t \in \mathbb{R}_{+} \Rightarrow  h \in \Theta_{r} \left( f + \dfrac{r}{q} \cdot g \right) $. \\
    \end{theorem}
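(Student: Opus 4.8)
The plan is to prove the membership $h \in \Theta_{r}\!\left(f + \frac{r}{q}g\right)$ by reducing every quantity to the common yardstick $g$ and then invoking the symmetry and transitivity theorems, mirroring the treatment of the two preceding cases. First I would record the auxiliary limits that follow at once from the three hypotheses $\lim_{n\to\infty} f'(n)/f(n) = r$, $\lim_{n\to\infty} g'(n)/g(n) = q$, and $\lim_{n\to\infty} f(n)/g(n) = t$: namely $\lim_{n\to\infty} \frac{f'(n)}{g(n)} = \lim_{n\to\infty} \frac{f'(n)}{f(n)}\cdot\frac{f(n)}{g(n)} = rt$ together with $\lim_{n\to\infty} \frac{g'(n)}{g(n)} = q$. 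Adding these gives $\lim_{n\to\infty} \frac{h(n)}{g(n)} = \lim_{n\to\infty} \frac{f'(n)+g'(n)}{g(n)} = rt + q$, so that $h \in \Theta_{rt+q}(g)$ by the Big $r$-Theta criterion of Corollary~1.

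Next I would locate the combined reference function relative to $g$. Dividing by $g$ and passing to the limit yields $\lim_{n\to\infty} \frac{f(n)+\frac{r}{q}g(n)}{g(n)} = t + \frac{r}{q}$, hence $f + \frac{r}{q}g \in \Theta_{t+r/q}(g)$, and by the symmetry theorem $g \in \Theta_{1/(t+r/q)}\!\left(f + \frac{r}{q}g\right)$. Chaining this with $h \in \Theta_{rt+q}(g)$ through the transitivity theorem for Big $r$-Theta then delivers $h \in \Theta_{\rho}\!\left(f + \frac{r}{q}g\right)$, where the product index is $\rho = (rt+q)\cdot\frac{1}{\,t+\frac{r}{q}\,}$. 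The entire statement now rests on the value of this single scalar.

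The main obstacle is therefore the closing algebraic reduction: I must verify that $\rho = \dfrac{rt+q}{\,t+\frac{r}{q}\,} = \dfrac{q(rt+q)}{qt+r}$ simplifies exactly to the claimed index $r$. This is the delicate step and the one I would scrutinise most carefully, because cross-multiplication recasts the target $\rho = r$ as the constraint $q(rt+q) = r(qt+r)$, that is $q^{2} = r^{2}$. Pinning down how the scaling constants $r$, $q$ and the chosen coefficient of $g$ in the reference interact, and confirming that the product index indeed returns $r$, is the crux on which the proof of the statement as written stands or falls, so I would settle that identity before committing to the conclusion $h \in \Theta_{r}\!\left(f + \frac{r}{q}g\right)$.
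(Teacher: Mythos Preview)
Your computation is correct and your hesitation is warranted: the index
\[
\rho \;=\; \frac{rt+q}{\,t+\frac{r}{q}\,} \;=\; \frac{q(rt+q)}{qt+r}
\]
equals $r$ if and only if $q^{2}=r^{2}$, i.e.\ $q=r$. The theorem as printed is therefore false in general; the correct conclusion is $h \in \Theta_{r}\!\left(f + \dfrac{q}{r}\,g\right)$, with the fraction inverted. One sees this immediately from your own set-up: dividing by $g$,
\[
\lim_{n\to\infty}\frac{h(n)}{f(n)+\frac{q}{r}\,g(n)} \;=\; \frac{rt+q}{\,t+\frac{q}{r}\,} \;=\; \frac{r(rt+q)}{rt+q} \;=\; r.
\]

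The paper argues along a different path---first $h\in\Theta_{t'+1}(g')$ with $t'=tr/q$, then $h\in\Theta_{t'+1}(q\,g)$, a ``conversion'' to $\Theta_{r}\!\bigl(\frac{1}{r}(t'+1)\,q\,g\bigr)$, and finally a replacement of the constant $t$ by the quotient $f(n)/g(n)$---but if one actually simplifies the resulting reference function one obtains $\frac{1}{r}\bigl(\frac{f}{g}\cdot\frac{r}{q}+1\bigr)q\,g = f + \frac{q}{r}\,g$, not $f+\frac{r}{q}\,g$. So the paper's own chain, carried through, also lands on the inverted coefficient; the printed $\frac{r}{q}$ is a slip for $\frac{q}{r}$. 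Your direct limit route via symmetry and transitivity is cleaner and, as you have found, has the additional merit of exposing the misprint rather than concealing it behind an unsimplified expression.
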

    \begin{proof}

        $ \lim_{n\to\infty} \dfrac{f(n)}{g(n)} = t \Rightarrow \lim_{n\to\infty} \dfrac{f'(n)}{g'(n)} = t \cdot \dfrac{r}{q} = t' \Rightarrow \lim_{n\to\infty} \dfrac{g'(n) + f'(n)}{g'(n)} = t' + 1 $ and using the result from asymptotic analysis section, we have $ g'(n) + f'(n) = h(n) \in \Theta_{t' + 1}(g')$. \\
        Using reflexivity property, $ h(n) \in \Theta_{t' + 1} (q \cdot g)$. \\
        Using the conversion technique, we have $ h(n) \in \Theta_{r}( \dfrac{1}{r} \cdot t \cdot \dfrac{r}{q} + 1) \cdot q \cdot g)$. \\
        By swapping back $t$, asymptotically, we can establish: 
        
        $ h(n) \in \Theta_{r} \left( \dfrac{1}{r} \cdot ( \dfrac{f(n)}{g(n)} \cdot \dfrac{r}{q} + 1) \cdot q \cdot g \right) $ \\
        Therefore $ h \in \Theta_{r} \left( f + \dfrac{r}{q} \cdot g \right) $.
    \end{proof}
    \begin{corollary} If $\lim_{n\to\infty} \dfrac{f(n)}{g(n)} = t, \ t \in \mathbb{R}_{+}$, the following relation holds:
        \[ \lim_{n\to\infty} \dfrac{f(n)}{g(n)} = t, \ t \in \mathbb{R}_{+} \Rightarrow \Theta_{r}(f) + \Theta_{q}(g) = \Theta_{r}(f + \dfrac{r}{q} \cdot g)\]
    \end{corollary}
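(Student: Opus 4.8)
The plan is to read this corollary as the relaxed-notation repackaging of the theorem established immediately above, so that essentially no new analytic work is required beyond making the set-theoretic meaning of the statement precise. First I would unpack the convention announced at the start of the section: in the expression $\Theta_{r}(f) + \Theta_{q}(g) = \Theta_{r}(f + \frac{r}{q} \cdot g)$, each class symbol denotes an arbitrary representative of that set, so the left-hand side stands for the sum $h = f' + g'$ of arbitrary functions $f' \in \Theta_{r}(f)$ and $g' \in \Theta_{q}(g)$, and the asserted equality means that every such $h$ is a representative of $\Theta_{r}(f + \frac{r}{q} \cdot g)$.

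With this reading, the forward content is immediate. Under the standing hypothesis $\lim_{n\to\infty} f(n)/g(n) = t$ with $t \in \mathbb{R}_{+}$, the preceding theorem asserts exactly that $h = f' + g' \in \Theta_{r}(f + \frac{r}{q} \cdot g)$. Thus I would simply invoke that theorem, combined with the admittance criterion for Big \textit{r-}Theta if an explicit limit verification is desired, and the inclusion $\Theta_{r}(f) + \Theta_{q}(g) \subseteq \Theta_{r}(f + \frac{r}{q} \cdot g)$ follows with no further computation, since the hypotheses of the corollary are literally those of the theorem.

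If one instead insists on reading the equality as a genuine identity of sets rather than the one-directional relaxed convention, the remaining task is the reverse inclusion: every $h \in \Theta_{r}(f + \frac{r}{q} \cdot g)$ must be exhibited as a sum $f' + g'$ with $f' \in \Theta_{r}(f)$ and $g' \in \Theta_{q}(g)$. Here I would look for a decomposition in which $f'$ is taken proportional to $f$ and $g'$ is the residual $h - f'$, then verify through the limit criterion that the residual lands in $\Theta_{q}(g)$, using that $t \in \mathbb{R}_{+}$ forces $f$ and $g$ to be asymptotically proportional. The main obstacle I expect is precisely this index bookkeeping in the reverse inclusion: the proportionality constant and the coefficient multiplying $g$ in the reference function are not independent, so requiring $\lim_{n\to\infty} f'/f = r$ and $\lim_{n\to\infty} g'/g = q$ to hold simultaneously is the step that pins down that coefficient exactly. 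Tracking these constants carefully is where the real care lies, whereas the forward direction reduces to citing the theorem.
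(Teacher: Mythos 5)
Your first two paragraphs are exactly the paper's own treatment: the paper supplies no separate proof for this corollary, regarding it as the preceding theorem rewritten in the relaxed notation announced at the start of the addition section, so reading the left-hand side as $h = f' + g'$ for arbitrary representatives $f' \in \Theta_{r}(f)$, $g' \in \Theta_{q}(g)$ and citing the theorem is precisely what the paper does. Your third paragraph goes beyond the paper, and the comparison is instructive. The paper's convention makes the displayed ``equality'' a one-directional claim, so it never faces the reverse inclusion; you do, and your proportional decomposition works: given $h$ with $h/(f + c\,g) \to r$, setting $f' = \frac{rf}{rf+qg}\,h$ and $g' = \frac{qg}{rf+qg}\,h$ gives $f' + g' = h$ with $f'/f \to r$ and $g'/g \to q$. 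However, the bookkeeping you correctly flag as the delicate step yields a dividend you did not cash in: a sum of representatives satisfies $h \approx rf + qg$, and $h/X \to r$ forces $X \approx \frac{1}{r}(rf + qg) = f + \frac{q}{r}\,g$, so the coefficient of $g$ is pinned to $\frac{q}{r}$, not $\frac{r}{q}$ (the two agree only when $r = q$). This is consistent with the paper's own proof of the theorem, whose penultimate expression $\frac{1}{r}\left(\frac{f(n)}{g(n)}\cdot\frac{r}{q}+1\right)\cdot q \cdot g$ simplifies to $f + \frac{q}{r}\,g$; the final line of that proof, and hence the statement of the theorem and of this corollary, contains an algebra slip. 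So your forward direction is faithful to the paper but inherits that error by citation, while your reverse-inclusion route is exactly the computation that would have caught it.
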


For addition in Big r-O, the following relations hold for any correctly defined functions $f, g, f', g', h:\mathbb{N}\longrightarrow\mathbb{R}$, where $ h(n) = f'(n) + g'(n)\  \forall n \in \mathbb{N} $, where $f',g'$ are two arbitrary functions such that $ f' \in \mathcal{O}_{r}(f), g' \in \mathcal{O}_{q}(g) $ and $r,q \in \mathbb{R}_{+}$:
    \begin{theorem}
        If $ \lim_{n\to\infty} \dfrac{f(n)}{g(n)} = 0 \Rightarrow  h \in \mathcal{O}_{q}(g) $. \\
    \end{theorem}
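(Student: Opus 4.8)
The plan is to reduce membership in $\mathcal{O}_q(g)$ to a single limit computation via the Big \textit{r-}O admittance criterion, mirroring exactly how the analogous Big \textit{r-}Theta addition theorem was handled earlier. First I would restate the two hypotheses in limit form: since $f' \in \mathcal{O}_r(f)$, the criterion gives some $l_1 \in [0, r]$ with $\lim_{n\to\infty} f'(n)/f(n) = l_1$, and since $g' \in \mathcal{O}_q(g)$, there is some $l_2 \in [0, q]$ with $\lim_{n\to\infty} g'(n)/g(n) = l_2$. By the same criterion, the goal becomes showing that $\lim_{n\to\infty} h(n)/g(n)$ exists and lies in the interval $[0, q]$.

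Next I would split the target quotient as $h(n)/g(n) = f'(n)/g(n) + g'(n)/g(n)$ and evaluate each summand separately. The second term tends to $l_2$ directly by hypothesis. For the first term, I would factor $f'(n)/g(n) = \bigl(f'(n)/f(n)\bigr)\cdot\bigl(f(n)/g(n)\bigr)$; since $f'(n)/f(n) \to l_1$ with $l_1$ finite and $f(n)/g(n) \to 0$ by assumption, the product tends to $l_1 \cdot 0 = 0$. Summing the two limits yields $\lim_{n\to\infty} h(n)/g(n) = l_2 \in [0, q]$, which by the Big \textit{r-}O criterion gives $h \in \mathcal{O}_q(g)$, as claimed. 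Intuitively, the faster-growing denominator $g$ absorbs the contribution of the slower term $f'$, so only $g'$ governs the asymptotic class.

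The main delicate point I expect to address is the legitimacy of splitting the limit and applying the product rule. The crux is that $l_1$ is guaranteed \emph{finite} (bounded above by $r$), so the product $f'/f \cdot f/g$ is a genuine $l_1 \cdot 0$ limit rather than an $\infty \cdot 0$ indeterminacy, and therefore converges cleanly to $0$; this finiteness is precisely what the Big \textit{r-}O hypothesis on $f'$ supplies. As the paper does implicitly for the Theta analogue, I would also record that these are asymptotic statements, so the quotients are well defined for all sufficiently large $n$ (where $f$ and $g$ are eventually nonzero) and the intermediate equalities are read in that regime.
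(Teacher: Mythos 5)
Your proof is correct, but it takes a genuinely different route from the one the paper indicates. The paper gives no standalone proof for this theorem: it refers back to the Big $r$-Theta addition proof, whose template is to normalize by the \emph{arbitrary member} $g'$ (showing $\lim_{n\to\infty} h(n)/g'(n) = 1$, hence $h \in \Theta_{1}(g')$) and then transfer to the reference function $g$ via the reflexivity/conversion machinery for the Big $r$-O class. You instead normalize by the \emph{reference function} $g$ directly, split $h/g = f'/g + g'/g$, and invoke the admittance criterion once, with the product rule $\bigl(f'/f\bigr)\cdot\bigl(f/g\bigr) \to l_1 \cdot 0 = 0$ handling the first term. Your decomposition is not just a cosmetic variant; it is actually more robust in the Big $r$-O setting. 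In the Theta case the paper's step $\lim f'/g' = 0$ is safe because $g'/g \to q > 0$ exactly, so $g/g'$ is bounded; but under the Big $r$-O hypothesis one only knows $\lim g'/g = l_2 \in [0,q]$, and when $l_2 = 0$ the quantity $f'/g' = \bigl(f'/f\bigr)\bigl(f/g\bigr)\bigl(g/g'\bigr)$ is a $0 \cdot \infty$ indeterminate form, so a verbatim transcription of the Theta proof breaks down (the limit $h/g'$ need not exist, let alone equal $1$). Dividing by $g$ sidesteps this entirely and covers all $l_1 \in [0,r]$, $l_2 \in [0,q]$ uniformly, at the modest cost of not reusing the paper's conversion lemmas. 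In short: the paper's sketch buys economy by recycling the Theta argument plus class-conversion rules, while your argument buys self-containment and correctness in the degenerate edge cases that Big $r$-O (unlike Big $r$-Theta) permits.
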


    \begin{corollary} 
        \[  \lim_{n\to\infty} \dfrac{f(n)}{g(n)} = 0 \Rightarrow \mathcal{O}_{r}(f) + \mathcal{O}_{q}(g) = \mathcal{O}_{q}(g)\]
    \end{corollary}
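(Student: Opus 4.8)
The plan is to invoke the limit criterion for Big \textit{r-}O membership (the corollary $f \in \mathcal{O}_r(g) \Leftrightarrow \lim_{n\to\infty} f(n)/g(n) = l$ with $l \in [0,r]$) and collapse the whole claim into a single limit computation for $h(n)/g(n)$. Since the statement is phrased for arbitrary $f' \in \mathcal{O}_r(f)$ and $g' \in \mathcal{O}_q(g)$, I would first translate the hypotheses through the criterion: there exist $l_1 \in [0,r]$ and $l_2 \in [0,q]$ with $\lim_{n\to\infty} f'(n)/f(n) = l_1$ and $\lim_{n\to\infty} g'(n)/g(n) = l_2$. The standing requirement that $\lim_{n\to\infty} f(n)/g(n) = 0$ be well defined also secures that $g$ is eventually nonzero, so every ratio below is legitimate for large $n$.

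Next I would decompose the target ratio into factors whose limits are already in hand:
\[ \frac{h(n)}{g(n)} = \frac{f'(n) + g'(n)}{g(n)} = \frac{f'(n)}{f(n)} \cdot \frac{f(n)}{g(n)} + \frac{g'(n)}{g(n)}. \]
Passing to the limit by the usual arithmetic of limits, the first summand tends to $l_1 \cdot 0 = 0$ (this is where the hypothesis $\lim_{n\to\infty} f/g = 0$ enters), while the second tends to $l_2$. Hence $\lim_{n\to\infty} h(n)/g(n) = l_2$, and since $l_2 \in [0,q]$ this value lies in the admissible interval. Applying the Big \textit{r-}O criterion in the reverse direction then delivers $h \in \mathcal{O}_q(g)$, completing the argument. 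This mirrors the structure of the corresponding Big \textit{r-}Theta theorem, with the equalities there replaced by the interval memberships appropriate to Big \textit{r-}O.

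I do not expect a genuine obstacle, but the one point deserving care is the passage to the limit in the product $\tfrac{f'}{f} \cdot \tfrac{f}{g}$: this step is valid precisely because both factors have finite limits ($l_1$ finite and $0$ finite), so the limit of the product equals the product of the limits whether $l_1$ is zero or positive. This is also why I would compute $h/g$ directly rather than imitate the earlier Big \textit{r-}Theta proof through $h/g'$: that route forces a division by $g'$, and since $g'/g$ may tend to $0$ when $l_2 = 0$ (a case permitted in Big \textit{r-}O though excluded in Big \textit{r-}Theta), the quantity $g/g'$ could be ill-behaved. Computing against $g$ sidesteps this entirely, so the only real subtlety is ensuring the limit arithmetic is applied to factors with guaranteed finite limits.
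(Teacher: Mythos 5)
Your proof is correct, but it takes a genuinely different route from the paper's. The paper gives no standalone argument for this corollary: it states that the Big \textit{r-}O addition results are proved ``similar with the proof for addition in Big Theta, by using conversion presented for Big r-O class,'' i.e., one is expected to deduce $\lim_{n\to\infty} f'(n)/g'(n)=0$, hence $\lim_{n\to\infty} h(n)/g'(n)=1$, place $h$ in $\Theta_{1}(g')$, and then convert back to $\mathcal{O}_{q}(g)$. You instead evaluate $\lim_{n\to\infty} h(n)/g(n)$ directly through the decomposition $h/g=(f'/f)\cdot(f/g)+g'/g$ and invoke the membership criterion once. The difference is substantive, not cosmetic: in Big \textit{r-}O the hypothesis $g'\in\mathcal{O}_{q}(g)$ yields only $\lim_{n\to\infty} g'(n)/g(n)=l_{2}\in[0,q]$, and when $l_{2}=0$ the paper's intermediate ratio $f'/g'$ is an indeterminate $0\cdot\infty$ form that need not vanish: with $f(n)=n$, $g(n)=n^{2}$, $f'(n)=r\cdot n$, $g'(n)=\sqrt{n}$ one gets $f'/g'\to\infty$, so $h\notin\Theta_{1}(g')$, yet $h/g\to 0\in[0,q]$ and the conclusion still holds. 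The transplanted Theta argument therefore breaks precisely in the case that Big \textit{r-}O permits but Big \textit{r-}Theta excludes, and your choice to normalize by $g$ rather than $g'$ --- which you flag explicitly --- is exactly what closes that gap. What the paper's route buys, when $l_{2}>0$, is reuse of the already-proved Theta theorem and the conversion machinery; what yours buys is uniform validity over all of $\mathcal{O}_{q}(g)$, at no cost beyond ordinary limit arithmetic and the tacit assumption (shared with the paper) that complexity functions are eventually positive so every ratio involved is defined.
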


    \begin{theorem}
    \end{theorem}

    \begin{corollary}
        \[ \lim_{n\to\infty} \dfrac{f(n)}{g(n)} = \infty \Rightarrow \mathcal{O}_{r}(f) + \mathcal{O}_{q}(g) = \mathcal{O}_{r}(f)\]
    \end{corollary}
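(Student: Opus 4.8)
The plan is to reduce the claim to the admittance criterion for Big \textit{r-}O and then evaluate a single limit. Recall that the criterion states $\varphi \in \mathcal{O}_{s}(\psi) \Leftrightarrow \lim_{n\to\infty} \varphi(n)/\psi(n) \in [0,s]$. I would first apply this to the two hypotheses $f' \in \mathcal{O}_{r}(f)$ and $g' \in \mathcal{O}_{q}(g)$, obtaining finite limits $\lim_{n\to\infty} f'(n)/f(n) = l_{1} \in [0,r]$ and $\lim_{n\to\infty} g'(n)/g(n) = l_{2} \in [0,q]$. The remaining hypothesis $\lim_{n\to\infty} f(n)/g(n) = \infty$ I would immediately rewrite in the equivalent form $\lim_{n\to\infty} g(n)/f(n) = 0$, since it is the ratio $g/f$ that appears in the computation.

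Next, I would decompose the target ratio as
\[ \frac{h(n)}{f(n)} = \frac{f'(n) + g'(n)}{f(n)} = \frac{f'(n)}{f(n)} + \frac{g'(n)}{g(n)} \cdot \frac{g(n)}{f(n)}. \]
The first summand tends to $l_{1}$, and the second is a product of a factor tending to the finite value $l_{2}$ and a factor tending to $0$, hence it tends to $l_{2}\cdot 0 = 0$. Therefore $\lim_{n\to\infty} h(n)/f(n) = l_{1} \in [0,r]$, and the admittance criterion yields $h \in \mathcal{O}_{r}(f)$, which is the desired conclusion.

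The step that requires the most care is the vanishing of the cross term $(g'/g)(g/f)$: a priori this has the indeterminate shape $(\text{finite})\cdot 0$ only because $g' \in \mathcal{O}_{q}(g)$ forces $g'/g$ to converge to the finite limit $l_{2} \le q$. If the limit of $g'/g$ were infinite, the product-of-limits rule would fail and the cross term could contribute a nonzero amount. It is precisely the finiteness guaranteed by membership in a Big \textit{r-}O class that legitimises writing $l_{2}\cdot 0 = 0$; this is the one place where the hypotheses are genuinely used rather than manipulated formally.

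Alternatively, mirroring the shortcut employed for the corresponding Big \textit{r-}Theta statement, the result follows from the preceding theorem (the case $\lim_{n\to\infty} f(n)/g(n) = 0$, which concludes $h \in \mathcal{O}_{q}(g)$) by the symmetric substitution $f \leftrightarrow g$, $r \leftrightarrow q$ together with the commutativity $h = f' + g' = g' + f'$; under this swap the hypothesis $\lim_{n\to\infty} f/g = \infty$ becomes $\lim_{n\to\infty} g/f = 0$ and the conclusion $h \in \mathcal{O}_{q}(g)$ becomes $h \in \mathcal{O}_{r}(f)$.
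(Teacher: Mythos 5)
Your proposal is correct, and it is actually more complete than what the paper itself supplies: for this corollary the paper states an (accidentally empty) theorem and then only remarks that the Big r-O addition results are ``similar with the proof for addition in Big Theta, by using conversion presented for Big r-O class,'' where the Big r-Theta $\infty$-case is in turn proved by the swap $f \leftrightarrow g$, $r \leftrightarrow q$ applied to the $0$-limit case --- exactly the shortcut you describe in your final paragraph. Your primary argument, however, takes a genuinely different route: you verify membership directly from the admittance criterion via the decomposition $h/f = f'/f + (g'/g)\cdot(g/f) \to l_{1} + l_{2}\cdot 0 = l_{1} \in [0,r]$. This buys something real. Transplanting the paper's Big r-Theta proof verbatim would require working with $\lim f'/g'$ and $\lim (g'+f')/g'$, i.e.\ dividing by $g'$, which is delicate in the Big r-O setting because $g' \in \mathcal{O}_{q}(g)$ only guarantees $g'/g \to l_{2} \in [0,q]$, so $l_{2}$ may be $0$ (indeed $g'$ could vanish identically); your decomposition divides only by $f$ and $g$, and the one place where the hypotheses do real work --- the cross term needing $l_{2}$ finite so that $l_{2}\cdot 0 = 0$ --- is precisely the step you flag. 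One cosmetic caveat: read literally as a set identity, the corollary's equality would also need the trivial reverse inclusion (any $\varphi \in \mathcal{O}_{r}(f)$ is $\varphi + 0$ with the zero function lying in $\mathcal{O}_{q}(g)$), but under the paper's declared ``relaxed notation'' the one-directional membership you prove is all that is meant.
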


    \begin{theorem}
        If $ \lim_{n\to\infty} \dfrac{f(n)}{g(n)} = t, \ t \in \mathbb{R}_{+} \Rightarrow  h \in \mathcal{O}_{r} \left( f + \dfrac{r}{q} \cdot g \right) $. \\
    \end{theorem}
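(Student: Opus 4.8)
The plan is to mirror the strategy used for the corresponding Big \textit{r-}Theta addition theorem, but to replace the exact-limit characterization of $\Theta_{r}$ by the interval characterization of $\mathcal{O}_{r}$ supplied by the Big \textit{r-}O admittance criterion. First I would translate the three hypotheses into limit statements: from $f' \in \mathcal{O}_{r}(f)$ there is a value $a \in [0,r]$ with $\lim_{n\to\infty} \frac{f'(n)}{f(n)} = a$; from $g' \in \mathcal{O}_{q}(g)$ there is a value $b \in [0,q]$ with $\lim_{n\to\infty} \frac{g'(n)}{g(n)} = b$; and by assumption $\lim_{n\to\infty} \frac{f(n)}{g(n)} = t \in \mathbb{R}_{+}$. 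By the same criterion, establishing $h \in \mathcal{O}_{r}\!\left(f + \frac{r}{q}\,g\right)$ reduces to proving that the single limit $\lim_{n\to\infty} \frac{h(n)}{f(n) + \frac{r}{q}\,g(n)}$ exists and lies in the interval $[0,r]$.

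The key computation is to evaluate that limit by dividing numerator and denominator by $g(n)$, exactly as in the $\Theta_{r}$ proof. Rewriting the quotient as $\frac{f'(n)/g(n) + g'(n)/g(n)}{f(n)/g(n) + r/q}$ and applying the algebra of limits together with $\frac{f'(n)}{g(n)} = \frac{f'(n)}{f(n)}\cdot\frac{f(n)}{g(n)} \to a\,t$, $\frac{g'(n)}{g(n)} \to b$, and $\frac{f(n)}{g(n)} \to t$, the quotient converges to $L = \frac{a\,t + b}{\,t + r/q\,}$. Since $t > 0$ and $r,q > 0$, the denominator is a fixed positive number, so this limit genuinely exists and is finite; this plays the role of the value $t' + 1$ that appears in the Big \textit{r-}Theta argument.

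The main obstacle is the final membership check, namely verifying $L \in [0,r]$, which is exactly where the interval nature of $\mathcal{O}_{r}$ (rather than the pinned value $r$ of $\Theta_{r}$) makes itself felt. Nonnegativity of $L$ is immediate, since every quantity involved is nonnegative. For the upper bound $L \le r$ I would clear the positive denominator and reduce to an inequality of the shape $a\,t + b \le r\bigl(t + r/q\bigr)$; the contribution $a\,t \le r\,t$ follows at once from $a \le r$ and $t > 0$, so everything hinges on absorbing the additive term $b$ into the slack created by the coefficient of $g$ in the target function. Confirming that this slack indeed dominates the bound $b \le q$ is the one step I expect to demand genuine attention, as it is precisely the bookkeeping of the constants $r$ and $q$ that separates this $\mathcal{O}_{r}$ statement from the sharper $\Theta_{r}$ version proved earlier. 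Once that comparison is secured, the Big \textit{r-}O admittance criterion immediately delivers $h \in \mathcal{O}_{r}\!\left(f + \frac{r}{q}\,g\right)$.
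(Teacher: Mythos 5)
Your strategy (reduce everything to the paper's limit criteria and compute $L=\lim_{n\to\infty} h(n)/\bigl(f(n)+\tfrac{r}{q}g(n)\bigr)$) is the natural one, and your computation $L=\frac{at+b}{t+r/q}$ is correct; but the step you flagged as "demanding genuine attention" is not a bookkeeping detail --- it is a gap that cannot be closed. Clearing the positive denominator, $L\le r$ requires $at+b\le rt+\frac{r^{2}}{q}$, and since $at\le rt$ is the best available bound on the first term, you need $b\le \frac{r^{2}}{q}$. The hypothesis only gives $b\le q$, and $q\le \frac{r^{2}}{q}$ holds if and only if $q\le r$, which is not assumed. Whenever $q>r$ the statement actually fails: take $r=1$, $q=4$, $f(n)=g(n)=n$ (so $t=1$), $f'=f$ (so $a=1\in[0,r]$), and $g'=4g$ (so $b=4\in[0,q]$). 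Then $h=5f$ while $f+\tfrac{r}{q}g=\tfrac{5}{4}f$, giving $L=4>1=r$, hence $h\notin\mathcal{O}_{1}\left(f+\tfrac{1}{4}g\right)$. So no amount of care with the constants can finish your proof: the theorem as printed is false.

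What your computation actually exposes is an error in the stated coefficient: it should be $\frac{q}{r}$, not $\frac{r}{q}$. This is visible in the paper itself --- its proof of the Big r-Theta analogue derives the target $\frac{1}{r}\left(\frac{f}{g}\cdot\frac{r}{q}+1\right)q\,g$, which simplifies to $f+\frac{q}{r}g$, and then misstates the conclusion as $f+\frac{r}{q}g$; the two agree only when $r=q$ (the paper gives no separate proof for the Big r-O version, asserting it is "similar", so it inherits the same defect). With the corrected target your argument closes in one line: $L=\frac{at+b}{t+q/r}=\frac{r(at+b)}{rt+q}\le\frac{r(rt+q)}{rt+q}=r$ using $a\le r$ and $b\le q$, and $L\ge 0$ trivially, so $h\in\mathcal{O}_{r}\left(f+\tfrac{q}{r}g\right)$. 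In short: right method, correctly located obstruction, but the conclusion to draw from that obstruction is that the statement itself must be repaired, not that finer estimates are needed.
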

    \begin{corollary}
        \[ \lim_{n\to\infty} \dfrac{f(n)}{g(n)} = t, \ t \in \mathbb{R}_{+} \Rightarrow \mathcal{O}_{r}(f) + \mathcal{O}_{q}(g) = \mathcal{O}_{r}(f + \dfrac{r}{q} \cdot g)\]
    \end{corollary}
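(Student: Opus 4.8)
The plan is to read the corollary as a genuine set identity in the relax notation, and hence to prove the two inclusions $\mathcal{O}_{r}(f) + \mathcal{O}_{q}(g) \subseteq \mathcal{O}_{r}(f + \frac{r}{q}\cdot g)$ and $\mathcal{O}_{r}(f + \frac{r}{q}\cdot g) \subseteq \mathcal{O}_{r}(f) + \mathcal{O}_{q}(g)$ separately. Throughout I would use the limit criterion for admittance (the Big \textit{r-}O corollary): $\phi \in \mathcal{O}_{r}(\psi)$ iff $\lim_{n\to\infty}\phi(n)/\psi(n)= l$ for some $l \in [0,r]$. Since $\lim_{n\to\infty} f/g = t \in \mathbb{R}_{+}$ with $t>0$, for all sufficiently large $n$ the functions $f$, $g$ and the reference $f + \frac{r}{q}\cdot g$ are comparable to $g$ (their ratios to $g$ tend to $t$, $1$ and $t + \frac{r}{q}$ respectively); this lets me convert any ratio taken against $f$ into a ratio against $g$, and guarantees the reference is nonzero for large $n$.

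For the forward inclusion I would simply invoke the preceding theorem. An arbitrary element of the left-hand side is $h = f' + g'$ with $f' \in \mathcal{O}_{r}(f)$ and $g' \in \mathcal{O}_{q}(g)$, which is \emph{exactly} the hypothesis of the Big \textit{r-}O addition theorem stated immediately above; that theorem yields $h \in \mathcal{O}_{r}(f + \frac{r}{q}\cdot g)$, establishing $\subseteq$ at once.

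The substantive direction is $\supseteq$, where I must exhibit, for each $h \in \mathcal{O}_{r}(f + \frac{r}{q}\cdot g)$, a decomposition $h = f' + g'$ with $f' \in \mathcal{O}_{r}(f)$ and $g' \in \mathcal{O}_{q}(g)$. My first attempt would be the split proportional to the two summands of the reference function,
\[ f'(n) = \frac{f(n)}{f(n) + \frac{r}{q}\,g(n)}\, h(n), \qquad g'(n) = \frac{\frac{r}{q}\,g(n)}{f(n) + \frac{r}{q}\,g(n)}\, h(n) \]
defined for $n \ge n_{0}$ (where the denominator is nonzero) and set to $h/2$ each on the finite initial segment, which cannot affect any limit. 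By construction $f' + g' = h$, and writing $l = \lim_{n\to\infty} h/(f + \frac{r}{q}\cdot g) \in [0,r]$ I would read off $\lim_{n\to\infty} f'/f = l \in [0,r]$, so that $f' \in \mathcal{O}_{r}(f)$, together with $\lim_{n\to\infty} g'/g = \frac{r}{q}\,l$.

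The main obstacle is the final membership check $g' \in \mathcal{O}_{q}(g)$: the proportional split produces $\lim_{n\to\infty} g'/g = \frac{r}{q}\,l$, and admittance into $\mathcal{O}_{q}(g)$ demands this lie in $[0,q]$, i.e. $\frac{r}{q}\,l \le q$ for every admissible $l \in [0,r]$. Establishing this compatibility between the coefficient $\frac{r}{q}$ and the class budget $[0,q]$ is precisely the delicate point; where it is not immediate I would abandon the rigid proportional split and instead rebalance, writing $f' = \alpha\, h$ and $g' = (1-\alpha)\, h$ for a bounded factor $\alpha$ chosen so that $\lim_{n\to\infty} \alpha\, h/f$ lands in $[0,r]$ while $\lim_{n\to\infty}(1-\alpha)\, h/g$ lands in $[0,q]$, exploiting the slack left by $l \le r$. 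Pinning down the admissible window for $\alpha$ — equivalently, confirming that the total mass $\lim_{n\to\infty} h/g = l\,(t + \frac{r}{q})$ can always be partitioned between the $f$-budget and the $g$-budget — is the crux of the argument and the step I expect to require the most care.
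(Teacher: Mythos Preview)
Your forward inclusion is exactly what the paper intends and is all that is being asserted. Recall that the paper explicitly introduces the corollaries in the \emph{relax notation}, in which a class symbol stands for ``an arbitrary function in that set''; an equation such as $\mathcal{O}_{r}(f)+\mathcal{O}_{q}(g)=\mathcal{O}_{r}(f+\tfrac{r}{q}g)$ is read left-to-right as the containment ``any $f'\in\mathcal{O}_{r}(f)$ plus any $g'\in\mathcal{O}_{q}(g)$ lies in $\mathcal{O}_{r}(f+\tfrac{r}{q}g)$''. That is precisely the preceding theorem, and the paper's proof is nothing more than ``same as the Big $r$-Theta case, using the Big $r$-O conversion''. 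Your first paragraph (``simply invoke the preceding theorem'') already closes the argument.

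The reverse inclusion you set out to prove is not part of the claim, and in fact it is \emph{false} in general, which is why your rebalancing step cannot be completed. Take $r=2$, $q=1$, and $f=g$ (so $t=1$); then $f+\tfrac{r}{q}g=3f$ and $h:=6f$ satisfies $\lim h/(3f)=2\in[0,2]$, so $h\in\mathcal{O}_{2}(3f)$. But any decomposition $h=f'+g'$ with $f'\in\mathcal{O}_{2}(f)$ and $g'\in\mathcal{O}_{1}(g)=\mathcal{O}_{1}(f)$ would force $\lim h/f=\lim f'/f+\lim g'/f\le 2+1=3$, contradicting $\lim h/f=6$. Your own analysis of the ``admissible window for $\alpha$'' already detects this: the condition you derive reduces, at $l=r$, to $r\le q$, so whenever $r>q$ the window is empty. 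The obstacle is not a delicate point to be finessed; it is a genuine counterexample to the set identity. Drop the $\supseteq$ direction and your proposal coincides with the paper's.
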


The proofs for the latest results is similar with the proof for addition in Big Theta, by using conversion presented for Big r-O class.

For addition in Big r-Omega, the following relations hold for any correctly defined functions $f, g, f', g', h:\mathbb{N}\longrightarrow\mathbb{R}$, where $ h(n) = f'(n) + g'(n)\  \forall n \in \mathbb{N} $, where $f',g'$ are two arbitrary functions such that $ f' \in \Omega_{r}(f), g' \in \Omega_{q}(g) $ and $r,q \in \mathbb{R}_{+}$:
    \begin{theorem}
        If $ \lim_{n\to\infty} \dfrac{f(n)}{g(n)} = 0 \Rightarrow  h \in \Omega_{q}(g) $. \\
    \end{theorem}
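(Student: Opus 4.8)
The plan is to reduce everything to the limit criterion for Big \textit{r-}Omega admittance (the Big \textit{r-}Omega corollary), which characterises membership by $h \in \Omega_{q}(g) \Leftrightarrow \lim_{n\to\infty} h(n)/g(n) = l$ with $l \in [q,\infty)$. So the whole argument amounts to computing $\lim_{n\to\infty} h(n)/g(n)$ and verifying that the value lands in $[q,\infty)$, exactly as the analogous Big \textit{r-}Theta theorem was handled, but tracking the weaker (one-sided) admittance condition.

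First, I would unpack the two hypotheses through the same criterion. Since $f' \in \Omega_{r}(f)$, the limit $\lim_{n\to\infty} f'(n)/f(n) = l_{1}$ exists and satisfies $l_{1} \in [r,\infty)$; since $g' \in \Omega_{q}(g)$, the limit $\lim_{n\to\infty} g'(n)/g(n) = l_{2}$ exists with $l_{2} \in [q,\infty)$. The decisive observation to record at this stage is that both $l_{1}$ and $l_{2}$ are \emph{finite} real numbers, never $\infty$.

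Second, I would split the target ratio as $h/g = f'/g + g'/g$ and evaluate each summand. For the first term I factor $f'(n)/g(n) = \bigl(f'(n)/f(n)\bigr) \cdot \bigl(f(n)/g(n)\bigr)$; the first factor tends to the finite number $l_{1}$ while the second tends to $0$ by hypothesis, so the product tends to $0$. The second term tends to $l_{2}$ directly. By the algebra of limits, $\lim_{n\to\infty} h(n)/g(n) = 0 + l_{2} = l_{2} \in [q,\infty)$, and the admittance criterion then yields $h \in \Omega_{q}(g)$.

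The main obstacle, though modest, is justifying that the asymptotically negligible summand $f'$ contributes exactly $0$ to the limit rather than producing an indeterminate form: this hinges on $l_{1}$ being finite, which is guaranteed precisely because Big \textit{r-}Omega membership confines the defining limit to the bounded-below-but-finite interval $[r,\infty)$. Had the definition permitted $l_{1} = \infty$, the product $l_{1} \cdot 0$ would be indeterminate and the conclusion could fail, so this is the point that genuinely uses the structure of $\Omega_{r}$. An equivalent route mirrors the Big \textit{r-}Theta proof: show $\lim_{n\to\infty} f'(n)/g'(n) = 0$, deduce $\lim_{n\to\infty} h(n)/g'(n) = 1$, and then compose with $g' \in \Omega_{q}(g)$ to recover $\lim_{n\to\infty} h(n)/g(n) = l_{2} \in [q,\infty)$; I would favour the direct computation above as the shorter path.
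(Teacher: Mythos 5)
Your proof is correct, and it is essentially the argument the paper intends: for this theorem the paper gives no standalone proof, deferring to the Big r-Theta addition proof ``using conversion presented for Big r-Omega class.'' That deferred argument would run $\lim_{n\to\infty} f'(n)/g'(n) = 0 \Rightarrow \lim_{n\to\infty} h(n)/g'(n) = 1$, giving $h \in \Theta_{1}(g')$, and then transfer from $g'$ to $g$ via $g' \in \Omega_{q}(g)$ --- exactly the ``equivalent route'' you sketch in your last sentence. Your primary path differs only in organization: you split $h(n)/g(n) = f'(n)/g(n) + g'(n)/g(n)$ and compute each summand's limit directly, which is slightly shorter and avoids the intermediate class $\Theta_{1}(g')$. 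Where your write-up genuinely improves on the paper's proof-by-analogy is the explicit observation that the Big r-Omega corollary confines $l_{1}$ and $l_{2}$ to the finite intervals $[r,\infty)$ and $[q,\infty)$, so the product $l_{1}\cdot 0$ is honestly $0$ rather than an indeterminate form; this is precisely the point where the Omega case differs from the Theta case (where the ratios tend to exactly $r$ and $q$), and the paper relies on it silently. You also correctly conclude from $\lim_{n\to\infty} h(n)/g(n) = l_{2} \in [q,\infty)$ rather than insisting the limit equal $q$, which is all that membership in $\Omega_{q}(g)$ requires under the paper's admittance criterion.
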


    \begin{corollary}
        \[  \lim_{n\to\infty} \dfrac{f(n)}{g(n)} = 0 \Rightarrow \Omega_{r}(f) + \Omega_{q}(g) = \Omega_{q}(g)\]
    \end{corollary}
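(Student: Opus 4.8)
The plan is to reduce everything to the admittance criterion for Big \textit{r-}Omega (the corollary stating $f \in \Omega_r(g) \Leftrightarrow \lim_{n\to\infty} f(n)/g(n) = l$ with $l \in [r,\infty)$) and then evaluate a single limit. First I would translate the two hypotheses into limit statements: since $f' \in \Omega_r(f)$, the criterion gives $\lim_{n\to\infty} f'(n)/f(n) = l_f$ for some $l_f \in [r,\infty)$, and since $g' \in \Omega_q(g)$ it gives $\lim_{n\to\infty} g'(n)/g(n) = l_g$ for some $l_g \in [q,\infty)$. The crucial point is that both of these limits are \emph{guaranteed to exist}, so the elementary sum and product laws for limits may be applied freely in the computation that follows.

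Next I would decompose the target ratio as $\dfrac{h(n)}{g(n)} = \dfrac{f'(n)}{g(n)} + \dfrac{g'(n)}{g(n)}$ and treat the two terms separately. For the first term I would write $\dfrac{f'(n)}{g(n)} = \dfrac{f'(n)}{f(n)} \cdot \dfrac{f(n)}{g(n)}$, so that the hypothesis $\lim_{n\to\infty} f(n)/g(n) = 0$ forces this term to $l_f \cdot 0 = 0$; thus the contribution of the negligible summand $f'$ vanishes. The second term already tends to $l_g$. Combining, $\lim_{n\to\infty} h(n)/g(n) = 0 + l_g = l_g$, and since $l_g \in [q,\infty)$ by construction, the Big \textit{r-}Omega criterion applied in the reverse direction yields $h \in \Omega_q(g)$, which is exactly the claim.

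To keep the argument parallel to the Big \textit{r-}Theta addition proof given earlier, I could instead first establish $\lim_{n\to\infty} f'(n)/g'(n) = 0$ (which follows from $f/g \to 0$ together with the existence of $\lim f'/f$ and $\lim g'/g$, the latter being nonzero as $l_g \ge q > 0$), deduce $\lim_{n\to\infty} h(n)/g'(n) = 1$ and hence $h \in \Theta_1(g')$, then invoke the projection property to obtain $h \in \Omega_1(g')$, and finally apply transitivity of Big \textit{r-}Omega with $g' \in \Omega_q(g)$ to conclude $h \in \Omega_{1 \cdot q}(g) = \Omega_q(g)$.

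I do not expect a genuine obstacle here; the only step requiring care is that each use of the limit arithmetic presupposes the existence of the individual limits. This is precisely what the admittance corollaries supply, so the argument is legitimate rather than merely formal. A secondary tacit assumption, consistent with the rest of the paper, is that $g(n)$ and $g'(n)$ are eventually nonzero so that the ratios are well defined.
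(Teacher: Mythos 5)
Your proposal is correct and takes essentially the same approach as the paper: the paper proves the Big r-Omega addition results by repeating the Big r-Theta argument (deduce $\lim_{n\to\infty} f'(n)/g'(n) = 0$, hence $h \in \Theta_{1}(g')$, then convert back to membership in $\Omega_{q}(g)$), which is exactly your second route via projection and transitivity. Your primary route---computing $\lim_{n\to\infty} h(n)/g(n) = l_g \in [q,\infty)$ directly from the admittance criteria---is merely a streamlined version of the same limit argument, with the added care (existence of the individual limits, nonvanishing denominators) made explicit rather than tacit.
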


    \begin{theorem}
        If $ \lim_{n\to\infty} \dfrac{f(n)}{g(n)} = \infty \Rightarrow  h \in \Omega_{r}(f) $. \\
    \end{theorem}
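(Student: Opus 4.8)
The plan is to mirror the proof of the corresponding Big $r$-Theta statement for the $\lim = \infty$ case, which was dispatched in one line by a swap argument, and thereby reduce the present claim to the immediately preceding Big $r$-Omega theorem (the $\lim = 0$ case). First I would observe that the entire addition setup for Big $r$-Omega is symmetric in the two triples $(f, r, f')$ and $(g, q, g')$: the hypotheses $f' \in \Omega_{r}(f)$ and $g' \in \Omega_{q}(g)$, together with $h = f' + g'$, are all invariant under simultaneously interchanging $f \leftrightarrow g$, $r \leftrightarrow q$, and $f' \leftrightarrow g'$, since addition is commutative.

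Next I would rewrite the hypothesis $\lim_{n\to\infty} \frac{f(n)}{g(n)} = \infty$ in the equivalent form $\lim_{n\to\infty} \frac{g(n)}{f(n)} = 0$. Applying the previous theorem to the relabelled data, in which $g$ takes the role of the dominated function and $f$ the role of the surviving one, its conclusion (membership of $h$ in the Big $r$-Omega class of the surviving function, indexed by the corresponding relabelled constant) becomes precisely $h \in \Omega_{r}(f)$, which is the desired statement.

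As an independent check---and an equally short self-contained alternative---I would compute the limit directly. Writing $\frac{h}{f} = \frac{f'}{f} + \frac{g'}{g} \cdot \frac{g}{f}$ and invoking the admittance criteria, we have $\frac{f'}{f} \to l_{1} \in [r, \infty)$ and $\frac{g'}{g} \to l_{2} \in [q, \infty)$, while $\frac{g}{f} \to 0$; hence $\lim_{n\to\infty} \frac{h(n)}{f(n)} = l_{1} \in [r, \infty)$, and the Big $r$-Omega criterion gives $h \in \Omega_{r}(f)$.

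The main obstacle---indeed the only point needing care---is verifying that the swap preserves every hypothesis so that the earlier theorem applies verbatim to the relabelled data; in particular, in the direct computation one must note that the finite limit $l_{2} \in [q, \infty)$ times $\frac{g}{f} \to 0$ contributes nothing, a step that would break only if $l_{2}$ were infinite, a possibility the admittance criterion explicitly excludes. Once this bookkeeping is confirmed, the conclusion follows immediately.
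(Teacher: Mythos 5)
Your swap argument is exactly the paper's intended proof: the paper dispatches this case by deferring to the Big $r$-Theta addition proofs, and the Big $r$-Theta analogue of this statement is proved precisely by the relabelling $f \leftrightarrow g$, $r \leftrightarrow q$, $f' \leftrightarrow g'$ together with commutativity of addition, reducing to the preceding $\lim = 0$ theorem. Your supplementary direct computation via $\frac{h}{f} = \frac{f'}{f} + \frac{g'}{g}\cdot\frac{g}{f}$ is a correct self-contained alternative, but the core route coincides with the paper's.
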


    \begin{corollary}
        \[ \lim_{n\to\infty} \dfrac{f(n)}{g(n)} = \infty \Rightarrow \Omega_{r}(f) + \Omega_{q}(g) = \Omega_{r}(f)\]
    \end{corollary}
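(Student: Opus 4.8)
The plan is to mirror the proof already given for the corresponding Big $r$-Theta statement, deriving this result from the previously established Big $r$-Omega addition theorem for the case $\lim_{n\to\infty} \frac{f(n)}{g(n)} = 0$ (which yields $h \in \Omega_q(g)$). First I would observe that the hypotheses are symmetric under the simultaneous relabelling $f \leftrightarrow g$, $f' \leftrightarrow g'$, $r \leftrightarrow q$: the defining assignment $h(n) = f'(n) + g'(n)$ is invariant under this swap by commutativity of addition, and the membership conditions $f' \in \Omega_r(f)$, $g' \in \Omega_q(g)$ merely exchange roles. Under this relabelling the hypothesis $\lim \frac{f}{g} = \infty$ becomes $\lim \frac{g}{f} = 0$, which is exactly the $\lim = 0$ hypothesis of the earlier theorem with the names swapped, while its conclusion $h \in \Omega_q(g)$ becomes $h \in \Omega_r(f)$, the desired statement.

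As an independent check, I would also argue directly via the limit criterion for $\Omega_r$ membership, namely the corollary $f \in \Omega_r(g) \Leftrightarrow \lim \frac{f}{g} = l$ with $l \in [r,\infty)$. Writing $\frac{h(n)}{f(n)} = \frac{f'(n)}{f(n)} + \frac{g'(n)}{g(n)} \cdot \frac{g(n)}{f(n)}$, I would use that $f' \in \Omega_r(f)$ forces $\frac{f'(n)}{f(n)} \to l_1 \in [r,\infty)$ and that $g' \in \Omega_q(g)$ forces $\frac{g'(n)}{g(n)} \to l_2 \in [q,\infty)$, with $l_2$ finite. Since $\lim \frac{f}{g} = \infty$ gives $\frac{g(n)}{f(n)} \to 0$, the second summand vanishes in the limit, so $\frac{h(n)}{f(n)} \to l_1 \in [r,\infty)$, and $h \in \Omega_r(f)$ follows by the corollary.

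The main obstacle here is bookkeeping rather than analysis. For the swap argument I must confirm that the earlier $\lim = 0$ theorem was stated with fully generic functions and constants, so that the instantiation after relabelling is legitimate, and that no hidden asymmetry in the order of the summands defining $h$ blocks the exchange; commutativity of addition resolves the latter. For the direct argument the only delicate point is verifying that $l_2$ lies in $[q,\infty)$ and is therefore \emph{finite}, so that the product $l_2 \cdot 0$ is genuinely $0$ and the limit of the sum splits as the sum of the limits. Both facts hold because $\Omega_r$ membership is characterised by a finite limit bounded below by the parameter, so I expect no genuine difficulty beyond carefully tracking which constant plays which role.
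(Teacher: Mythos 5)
Your proposal is correct and takes essentially the same route as the paper: the paper handles the Big $r$-Omega addition results by analogy with the Big $r$-Theta case, where the $\lim = \infty$ statement is obtained exactly by your swap $f \leftrightarrow g$, $f' \leftrightarrow g'$, $r \leftrightarrow q$ applied to the $\lim = 0$ theorem, invoking commutativity of addition. Your supplementary direct argument via the limit criterion (using that the paper's corollary guarantees $g'/g$ tends to a \emph{finite} limit in $[q,\infty)$, so the cross term vanishes) is also sound and mirrors the limit computation the paper itself uses in the $\lim = 0$ proof, so it adds a consistent independent check rather than a divergent method.
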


    \begin{theorem}
        If $ \lim_{n\to\infty} \dfrac{f(n)}{g(n)} = t, \ t \in \mathbb{R}_{+} \Rightarrow  h \in \Omega_{r} \left( f + \dfrac{r}{q} \cdot g \right) $. \\
    \end{theorem}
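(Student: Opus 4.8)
The plan is to reduce the statement to the limit-based admittance criterion for the Big \textit{r-}Omega class established in Section~\ref{definitions}, namely $h \in \Omega_{r}(H) \Leftrightarrow \lim_{n\to\infty} h(n)/H(n) = l$ with $l \in [r,\infty)$, taking $H = f + \tfrac{r}{q}\,g$. As in the companion cases $\lim f/g = 0$ and $\lim f/g = \infty$, I would compute the limit of $h/H$ directly from the hypotheses rather than detouring through an intermediate class; the whole argument then comes down to producing this limit and checking that it is bounded below by $r$.

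First I would record the hypotheses in limit form: since $f' \in \Omega_{r}(f)$ and $g' \in \Omega_{q}(g)$, the admittance criteria give $\lim_{n\to\infty} f'(n)/f(n) = l_{1} \in [r,\infty)$ and $\lim_{n\to\infty} g'(n)/g(n) = l_{2} \in [q,\infty)$. Dividing numerator and denominator of $h/H$ by $g$ and using $\lim_{n\to\infty} f(n)/g(n) = t$ gives
\[ \lim_{n\to\infty} \frac{f'(n)+g'(n)}{f(n)+\tfrac{r}{q}\,g(n)} = \frac{l_{1}\, t + l_{2}}{\,t + \tfrac{r}{q}\,}, \]
so the claim $h \in \Omega_{r}(f + \tfrac{r}{q}\,g)$ is reduced to the single inequality $\dfrac{l_{1}\,t + l_{2}}{\,t + r/q\,} \ge r$. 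For consistency with the Big \textit{r-}Theta addition proof I would also cross-check via the established machinery: each representative satisfies $f' \in \Theta_{l_{1}}(f)$ and $g' \in \Theta_{l_{2}}(g)$, so the Big \textit{r-}Theta addition theorem applies to $h = f'+g'$, and the Projection property then yields membership in a Big \textit{r-}Omega class, against which the direct limit above can be reconciled.

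The main obstacle is verifying that one-sided bound using only the control $l_{1} \ge r$ and $l_{2} \ge q$ supplied by the hypotheses. Clearing denominators, the target inequality reads $l_{1}\,t + l_{2} \ge r\,t + r^{2}/q$, and here the interplay between the coefficient $r/q$ in the comparison function and the lower bounds $l_{1} \ge r$, $l_{2} \ge q$ must be handled carefully, tracking the extremal representatives $f' \in \Theta_{r}(f)$, $g' \in \Theta_{q}(g)$ at which the ratio is tightest (since increasing either $l_{1}$ or $l_{2}$ only raises the quotient). This coefficient bookkeeping, carried out exactly as in the Big \textit{r-}Theta proof, is routine in spirit but is precisely the step where an error would hide, so it is the one I would check most carefully before concluding $h \in \Omega_{r}(f + \tfrac{r}{q}\,g)$.
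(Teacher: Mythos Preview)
Your route is genuinely different from the paper's. The paper gives no standalone argument for the Big $r$-Omega case; it simply points back to the Big $r$-Theta addition proof, which first locates $h$ in $\Theta_{t'+1}(g')$ (with $t'=rt/q$), then passes to $\Theta_{t'+1}(q\cdot g)$ and finally uses the conversion technique to land at index $r$. You instead apply the limit criterion directly to $h/(f+\tfrac{r}{q}\,g)$, which is cleaner and avoids the intermediate classes entirely.

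However, the step you flag as ``precisely where an error would hide'' does hide one, and it cannot be closed for the statement as written. At the extremal representatives $l_{1}=r$, $l_{2}=q$ your target inequality $l_{1}t+l_{2}\ge rt+r^{2}/q$ reduces to $q\ge r^{2}/q$, i.e.\ $q\ge r$, which is nowhere assumed. Concretely, take $r=2$, $q=1$, $f=g$ (so $t=1$), $f'=2f$, $g'=g$; then $h=3g$ while $f+\tfrac{r}{q}\,g=3g$, giving $\lim h/H=1<2=r$, so $h\notin\Omega_{r}(f+\tfrac{r}{q}\,g)$.

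The culprit is a slip in the comparison function, already present in the Big $r$-Theta theorem you are mirroring: if you carry the paper's own computation to its penultimate line, the expression $\tfrac{1}{r}\bigl(\tfrac{f}{g}\cdot\tfrac{r}{q}+1\bigr)\cdot q\cdot g$ simplifies to $f+\tfrac{q}{r}\,g$, not $f+\tfrac{r}{q}\,g$. With the corrected target $H=f+\tfrac{q}{r}\,g$, your inequality becomes $l_{1}t+l_{2}\ge rt+q$, which follows at once from $l_{1}\ge r$ and $l_{2}\ge q$, and your direct-limit proof then goes through without further work.
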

    \begin{corollary}
        \[ \lim_{n\to\infty} \dfrac{f(n)}{g(n)} = t, \ t \in \mathbb{R}_{+} \Rightarrow \Omega_{r}(f) + \Omega_{q}(g) = \Omega_{r}(f + \dfrac{r}{q} \cdot g)\]
    \end{corollary}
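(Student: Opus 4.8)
The plan is to establish the set identity in the relax notation by proving the two inclusions separately, reading $\Omega_{r}(f)+\Omega_{q}(g)$ as the collection of sums $h=f'+g'$ with $f'\in\Omega_{r}(f)$ and $g'\in\Omega_{q}(g)$, and reading $\Omega_{r}(f+\frac{r}{q}\cdot g)$ as an arbitrary member of that class. Throughout I would lean on the admittance criterion for Big $r$-Omega, namely $\varphi\in\Omega_{s}(\psi)\Leftrightarrow\lim_{n\to\infty}\frac{\varphi(n)}{\psi(n)}=l$ with $l\in[s,\infty)$, which converts every membership question into a statement about a single limit lying in a half-line.

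For the inclusion $\Omega_{r}(f)+\Omega_{q}(g)\subseteq\Omega_{r}(f+\frac{r}{q}\cdot g)$ there is nothing new to do: this is exactly the preceding theorem, which asserts that under $\lim_{n\to\infty}\frac{f(n)}{g(n)}=t\in\mathbb{R}_{+}$ every representative sum $h=f'+g'$ already lies in $\Omega_{r}(f+\frac{r}{q}\cdot g)$. I would simply cite it, mirroring the way each earlier addition corollary is read off its companion theorem, and remark that the Big $r$-Omega theorem itself is obtained by the same chain as the Big $r$-Theta addition proof (the limit of $\frac{f'}{g'}$, then of $\frac{f'+g'}{g'}$, then reflexivity and the conversion step), with the limit equalities replaced by the interval bounds $l_{1}\ge r$ and $l_{2}\ge q$.

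For the reverse inclusion $\Omega_{r}(f+\frac{r}{q}\cdot g)\subseteq\Omega_{r}(f)+\Omega_{q}(g)$ I would take an arbitrary $H\in\Omega_{r}(f+\frac{r}{q}\cdot g)$, so that $L:=\lim_{n\to\infty}\frac{H(n)}{f(n)+\frac{r}{q}g(n)}\in[r,\infty)$, and exhibit a decomposition $H=f'+g'$ with the correct individual memberships. The natural candidate is the weighting $f':=\frac{f}{\,f+\frac{r}{q}g\,}\cdot H$ and $g':=\frac{\frac{r}{q}g}{\,f+\frac{r}{q}g\,}\cdot H$, which satisfies $f'+g'=H$ identically; the coupling $\lim_{n\to\infty}\frac{f(n)}{g(n)}=t$ guarantees both weights have finite positive limits, so each of $\frac{f'}{f}$ and $\frac{g'}{g}$ converges. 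One then checks that the two limits fall into $[r,\infty)$ and $[q,\infty)$. Should a particular $H$ push one summand below its threshold, I would instead shift a bounded multiple of $g$ between the two parts, using the freedom to choose any split whose asymptotic ratios $a$ and $b$ satisfy $\frac{a t+b}{t+\frac{r}{q}}=L$, and aim to select $a\ge r$ and $b\ge q$.

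The main obstacle is precisely this reverse inclusion: verifying that the split of $H$ can be arranged so that both summands \emph{simultaneously} clear their lower bounds $r$ and $q$. This is where the hypothesis $t\in\mathbb{R}_{+}$ does the real work, since it fixes the relative scale of $f$ and $g$ and thereby lets the single inequality $L\ge r$ be distributed into the two separate admittance conditions; the forward direction, by contrast, is immediate from the theorem. I expect the constant bookkeeping in this decomposition to be the only nontrivial computation, and it runs parallel to the constant-chasing already carried out in the Big $r$-Theta addition proof.
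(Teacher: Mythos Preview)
Your forward inclusion is fine and matches what the paper actually does: in the paper the corollary is nothing more than the preceding theorem rewritten in the ``relax notation'' announced at the start of Section~5, where an equation between complexity classes is read as ``any representative of the left-hand side belongs to the right-hand side.'' No reverse inclusion is proved, or even intended, by the paper.

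Your attempt to upgrade the statement to a genuine set equality runs into a real obstruction. With $f=g$ (so $t=1$), $r=1$, $q=2$, the right-hand class is $\Omega_{1}\bigl(\tfrac{3}{2}f\bigr)$, which contains $H=\tfrac{3}{2}f$. Any decomposition $H=f'+g'$ with $f'\in\Omega_{1}(f)$ and $g'\in\Omega_{2}(f)$ forces, via the admittance criterion, $\lim f'/f\ge 1$ and $\lim g'/f\ge 2$, hence $\lim H/f\ge 3$; but $\lim H/f=\tfrac{3}{2}$. So the reverse inclusion is simply false when $r<q$, and the ``shift a bounded multiple of $g$ between the two parts'' manoeuvre you propose cannot rescue it: the constraint $at+b=L\bigl(t+\tfrac{r}{q}\bigr)$ with $a\ge r$, $b\ge q$ has no solution once $L$ is at its minimum $r$ and $r<q$. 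The fix is not in the bookkeeping but in the interpretation of the equality sign---drop the reverse inclusion and present the corollary exactly as the paper does, as shorthand for the theorem.
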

    
The proofs for the latest results is similar with the proof for addition in Big Theta, using conversion presented for Big r-Omega class. 

The additions in Small r-O and Small r-Omega have the same properties as described in Bachmann-Landau notations.

\section{Conclusions}
\label{conclusion}

The r-Complexity~\cite{rc} model proposes a different approach to measuring the algorithm complexity, that seeks to provide a more refined awareness of algorithm performance, by not only performing asymptotic analysis, but also providing relevant estimations for finite inputs. 

Compared to traditional Bachmann-Landau notations, r-Complexity can distinguish between algorithms that would otherwise fall under the same complexity class, providing more granular insights for developers in deciding the optimal solution to use. Moreover, r-Complexity classes have a flexible nature, as functions defining classes can be easily converted, given different values of $r$, via the simple technique of changeover developed in~\cite{rc}. For this reasons, the r-Complexity framework has the potential to enhance the standard ways engineers and developers assess and optimize their algorithms. The presented applications in Section~\ref{motivation} demonstrated that r-Complexity provides a valuable framework for analysing algorithm performance beyond asymptotic complexity, enabling more accurate predictions(as observed on matrix multiplication) and deeper understanding of program behaviour (as exemplified by their applicability on building code embeddings). One of the key insights provided by this complexity model is that Strassen's algorithm, despite its theoretical asymptotic advantage, is slower than a traditional matrix multiplication, for practical input sizes.

The main contributions of this paper consists in enumerating and proofing some of the properties of calculus within r-Complexity classes, such as reflexivity, transitivity, symmetry, and projections. Additionally, the study investigated the behaviour of r-Complexity classes under addition and conversion rules between r-Complexity classes and the more traditional Bachmann-Landau complexity classes. These findings aim to help users of r-Complexity to speed up their calculus, and potentially support the wider adoption of this method of performing algorithm analysis and trade-off comparisons.

%%%%%%%%%%%

\end{document}